\newtheorem{proposition}{Proposition}
\newcommand{\edww}[1]{{\color{black}#1}}
\newcommand{\edwER}[1]{{\color{black}#1}}
\newcommand{\edz}[1]{{\color{black}#1}}
\newcommand{\edf}[1]{{\color{black}#1}}
\newcommand{\edwn}[1]{{\color{black}#1}}
\newcommand{\edff}[1]{{\color{black}#1}}
\newcommand{\edzr}[1]{{\color{black}#1}}
\newcommand{\edzs}[1]{{\color{black}#1}}
\newcommand{\rr}[1]{{\color{black}#1}}
\pretocmd\@bibitem{\color{black}\csname keycolor#1\endcsname}{}{\fail}
\newcommand\citecolor[1]{\@namedef{keycolor#1}{\color{blue}}}
\begin{document}
\title{Multi-Cell Sparse Activity Detection for Massive \edzr{Random Access}: Massive MIMO versus Cooperative MIMO}

\author{Zhilin~Chen,~\IEEEmembership{Student Member,~IEEE,}
        Foad~Sohrabi,~\IEEEmembership{Member,~IEEE,}
        and~Wei~Yu,~\IEEEmembership{Fellow,~IEEE}% <-this % stops a space
\thanks{The authors are with The Edward S. Rogers Sr. Department of Electrical
and Computer Engineering, University of Toronto, Toronto, ON M5S 3G4,
Canada (e-mails:\{zchen, fsohrabi, weiyu\}@comm.utoronto.ca). This work has been presented in part at IEEE International Conference on Acoustics, Speech, and Signal Processing (ICASSP), Calgary, Canada, April 2018 \cite{Chen2018C}. \edff{This work is supported by the Natural Sciences and Engineering Research Council of Canada.
}}}

\maketitle

%%%%%%%%%%%%%%%%%%%%%%%%%%%%%%%%%%
% 	Abstract
%%%%%%%%%%%%%%%%%%%%%%%%%%%%%%%%%%
\begin{abstract}
This paper considers sparse device activity detection for cellular machine-type communications with non-orthogonal signatures using the approximate message passing algorithm. This paper compares two network architectures, massive multiple-input multiple-output (MIMO) and cooperative MIMO, in terms of their effectiveness in overcoming inter-cell interference. In the massive MIMO architecture, each base station (BS) detects only the users from its own cell while treating inter-cell interference as noise. In the cooperative MIMO architecture, each BS detects the users from neighboring cells as well; the detection results are then forwarded in the form of log-likelihood ratio (LLR) to a central unit where final decisions are made. This paper analytically characterizes the probabilities of false alarm and missed detection for both architectures. Numerical results validate the analytic characterization and show that as the number of antennas increases, a massive MIMO system effectively drives the detection error to zero, while as the cooperation size increases, the cooperative MIMO architecture mainly improves the cell-edge user performance. Moreover, this paper studies the effect of LLR quantization to account for the finite-capacity fronthaul. Numerical simulations of a practical scenario suggest that in that specific case cooperating three BSs in a cooperative MIMO system achieves about the same cell-edge detection reliability as a non-cooperative massive MIMO system with four times the number of antennas per BS.
\end{abstract}

%%%%%%%%%%%%%%%%%%%%%%%%%%%%%%%%%%
% 	Key Words
%%%%%%%%%%%%%%%%%%%%%%%%%%%%%%%%%%
\begin{IEEEkeywords}
Compressed sensing, approximate message passing (AMP), massive connectivity, massive multiple-input multiple-output (MIMO), machine-type communications (MTC).
\end{IEEEkeywords}

%%%%%%%%%%%%%%%%%%%%%%%%%%%%%%%%%%
% 	I. Introduction
%%%%%%%%%%%%%%%%%%%%%%%%%%%%%%%%%%
\section{Introduction}
Massive machine-type communications (mMTC) aim to meet the demand for wireless connectivity for tens of millions of devices with event-driven traffic in application domains of the future fifth generation (5G) cellular infrastructure \cite{Durisi2016}. A main challenge of mMTC is scalable and efficient random access design in the uplink for a large pool of devices, among which only a small fraction of devices are active in each coherence time interval due to sporadic traffic \cite{Shirvanimoghaddam2017}. Under such circumstances, it is necessary for base stations (BSs) to identify the active devices as an initial step to enable subsequent data receiving and decoding processes \cite{Wei2016}. This paper studies the sporadic user activity detection for the 5G wireless cellular networks.

This paper considers a pilot-based random access protocol, where the active users transmit their unique signatures synchronously to their BSs at the start of a random access procedure. To accommodate a large number of devices within limited coherence time interval, non-orthogonal signature sequences are employed. \edzr{This differs from the conventional random multiple access in which orthogonal sequences are used and collisions are handled via retransmission. With non-orthogonal sequences, the BSs perform user activity detection based on the received signal, which is a combination of the signatures of active users.} Considering the sparse nature of the activity patterns of the devices and the need for scalable low-complexity algorithm for activity detection, this paper adopts the approximate message passing (AMP) algorithm from compressed sensing.

In conventional cellular networks, inter-cell interference is often seen as a crucial limiting factor, as each BS operates independently by treating the interference as background noise. As a result, the user activity detection accuracy would be \edzr{severely} impacted by inter-cell interference in a multi-cell system as compared to a single-cell system. This paper investigates two promising network architectures for alleviating the inter-cell interference: \edzr{non-cooperative} massive multiple-input and multiple-output (MIMO) and cooperative MIMO, and compares their effectiveness in combating
\edww{inter-cell interference}. Although these two architectures have been extensively studied in the literature from the perspective of
maximizing the achievable rate for data transmission, e.g., \cite{Jungnickel2014,Hosseini2014a}, and references therein, the comparison in terms of user activity detection is not yet available. The main goal of this paper is to characterize the performance of user activity detection for both the massive MIMO and the cooperative MIMO architectures in multi-cell systems.

For massive MIMO networks \edzr{without BS cooperation}, each BS operates independently---only focusing on detecting the active users from its own cell while treating the inter-cell interference as noise, \edww{because} without cooperation, the BSs are unlikely to obtain knowledge about the out-of-cell users. \edww{Nevertheless}, the impact of interference on user activity detection can be \edww{significantly} alleviated by increasing the number of antennas per BS, as shown in this paper.

For cooperative MIMO networks \edzr{where the antenna arrays are not necessarily large}, the BSs are connected in a cloud-radio access network (C-RAN) architecture,
\edww{thus} the knowledge of all the users in the network can be shared among the BSs, based on which the inter-cell interference can be exploited rather than treated as noise.
\edww{In such a network, depending on the functional split of C-RAN, each BS can either forward the received signal to a central unit (CU) for centralized user activity detection, or perform a preliminary user activity detection and forward the preliminary detection results to the CU for final decisions.}
This paper focuses on the detecting and forwarding approach, which is more practical, while taking
complexity and fronthaul capacity requirements into consideration. Note that in such an approach, each BS should detect \edww{not only the users} from its own cell but also from \edww{the neighboring cells, so} that multiple detection results with respect to the same user can be obtained \edww{from} different BSs and aggregated at the CU.

The main objective of this paper is to quantify the benefits of massive MIMO versus cooperative MIMO in terms of two types of detection error: the probability of false alarm and the probability of missed detection.
This paper shows that massive MIMO can substantially improve the performance of all users, whereas cooperative MIMO mainly benefits the cell-edge users. Further, this paper shows that
\edww{significantly} fewer antennas per BS are needed for cooperative MIMO as compared to massive MIMO to achieve comparable performance for
the cell-edge users.

%%%%%%%%%%%%%%%%%%%%%%%%%%%%%%%%%%
% 	I.A Related Work
%%%%%%%%%%%%%%%%%%%%%%%%%%%%%%%%%%
\subsection{Related Work}
Device activity detection problem has been investigated in a variety of wireless systems using different approaches. For example, \cite{Dekorsy2015,Wunder2015} propose the use of compressed sensing techniques for joint user activity detection and data detection/channel estimation in cellular systems
without considering the effect of inter-cell interference. In code-division multiple access
systems, \edww{sparse user activity detection} is considered jointly with multi-user detection via a sparsity-exploiting
maximum \emph{a posteriori}
approach in \cite{ZhuGiannakis2011}. By further exploiting channel statistics, \cite{Chen2018} adopts the AMP algorithm with Bayesian denoiser for activity detection, and characterizes the detection performance. In \cite{Boljanovic2017}, two approaches, compressed sensing technique and coded slotted ALOHA, are compared in terms of detection accuracy and energy efficiency in user activity detection.

In the context of
massive MIMO systems, the user activity detection is considered in \cite{Liu2018,Bjornson2017,Haghighatshoar2018,Senel2017}. Assuming non-orthogonal Gaussian sequences, \cite{Liu2018} studies the user activity detection performance in single-cell scenario in the asymptotic regime, showing that perfect detection can be achieved by employing AMP. By using mutually orthogonal
pilot sequences and designing an uncoordinated pilot collision resolution protocol, \cite{Bjornson2017} investigates the user activity detection in multi-cell massive MIMO systems and analyzes the collision probability. In the massive MIMO setup, \cite{Haghighatshoar2018} studies the scaling law of user activity detection with finite-length signature sequences by focusing on the covariance matrix
\edww{of the received signal across the antenna domain.}
In particular, \cite{Haghighatshoar2018} shows that using the covariance-based techniques
enables a massive connectivity network to accommodate more users as compared to employing the existing compressed sensing techniques. By embedding one bit information
 in the user activity detection for control signaling,
 \cite{Senel2017} proposes an AMP-based joint user activity detection and information decoding method for massive MIMO systems.

The user activity detection is also studied in C-RAN in \cite{Lau2015,Simeone2016}. A
Bayesian compressed sensing algorithm is proposed in \cite{Lau2015}, where the received signals from all the BSs are concatenated at the CU followed by a joint user activity detection. By
considering limited capacity of the fronthaul links between the BSs and the CU, \cite{Simeone2016} compares two schemes---centralized detection with received signal quantization and distributed detection with log-likelihood ratio \edz{(LLR)} quantization---via simulations, demonstrating that centralized detection is preferred with high fronthaul capacity whereas distributed detection is preferred with low fronthaul capacity.

Besides the aforementioned works on design and analysis for practical networks, there are also related works \cite{ChenGuo2017,Wei2016,Polyanskiy2017,Or2017} that address the massive random access problem with sparse user activity from information theoretical perspectives.

%%%%%%%%%%%%%%%%%%%%%%%%%%%%%%%%%%
% 	I.B Main Contributions
%%%%%%%%%%%%%%%%%%%%%%%%%%%%%%%%%%
\subsection{Main Contributions}
This paper studies the user activity detection problem with non-orthogonal signature sequences for massive connectivity in cellular networks \edww{using the AMP algorithm.} Two potential network architectures, massive MIMO and cooperative MIMO, are investigated and compared. The main contributions of this paper are summarized as follows.

For the massive MIMO architecture, this paper employs the AMP algorithm for the user activity detection by treating the inter-cell interference as noise. Through a state evolution analysis in the asymptotic regime where the number of users per cell and the length of signature sequences tend to infinity, while keeping their ratio fixed, this paper investigates the impact of the inter-cell interference, and reveals the relation \edzr{between the multi-cell system parameters and the performance of AMP, based on which the probability of false alarm and the probability of missed detection are characterized. By further letting the number of antennas per BS go to infinity, the asymptotic results show that the probability of false alarm and the probability of missed detection can be effectively driven to zero, even
in the existence of inter-cell interference.}

For the cooperative MIMO architecture, this paper considers a cooperative detection scheme, where each BS seeks to detect the users from its own cell as well as several neighboring cells by recovering the inter-cell interference, with the help of the knowledge of the signature sequences and the channel \edzr{statistics} of \edww{the} out-of-cell users. The detection results are then forwarded in the form of \edz{LLR} to the CU, where final decisions on the user activities are made based on an aggregation of the results from all the BSs. \edzr{Through a state evolution analysis, this paper shows that as compared to the massive MIMO case where the inter-cell interference is simply treated as noise, recovering the inter-cell interference at each BS by detecting out-of-cell users in cooperative MIMO achieves better performance.}
This paper \edww{first} characterizes \edf{the probabilities of false alarm and missed detection} by assuming \edww{infinite-capacity} fronthaul links, then further considers the impact of the limited capacity of the fronthaul links by proposing a quantization design that takes the variation of the dynamic range of LLRs for different users into consideration.

This paper conducts extensive simulation studies to validate the analytical results and to compare massive MIMO and cooperative MIMO architectures for user activity detection. The results confirm the analytical characterization of the detection performance for both architectures, and show that massive MIMO is effective in improving the performance of all users, while cooperative MIMO is \edww{most} effective in improving the performance of the cell-edge users. \edzs{In terms of cell-edge user performance, based on numerical simulations of a practical scenario, cooperating three BSs in a cooperative MIMO system achieves roughly the same performance as having four times the number of antennas per BS in a non-cooperative massive MIMO system in that specific case.}

\edww{By further considering the capacity limits of the fronthaul links, we observe that for cooperative MIMO with each user cooperatively detected by three BSs, the performance of the proposed quantization scheme with about 3-4 quantization bits per LLR can already approach the performance of the infinite-capacity fronthaul case.}

%%%%%%%%%%%%%%%%%%%%%%%%%%%%%%%%%%
% 	I.C Paper Organization and Notations
%%%%%%%%%%%%%%%%%%%%%%%%%%%%%%%%%%
\subsection{Paper Organization and Notations}
The remainder of the paper is organized as follows. Section~\ref{sec.sysmod} describes the multi-cell system model and introduces the sparse user activity detection problem for massive MIMO and cooperative MIMO architectures. Section~\ref{sec.amp} presents the AMP-based user activity detection algorithms for both architectures, \edzr{and their performances are analyzed in Section~\ref{sec.massive}.}
Section~\ref{sec.simu} presents the simulation results. \edww{Section~\ref{sec.con} concludes the paper.}

Throughout this paper, upper-case and lower-case letters denote random variables and their realizations, respectively. Boldface lower-case letters denote vectors. Boldface upper-case letters denote matrices or random vectors, where context should make the distinction clear. Superscripts $(\cdot)^{T}$ and $(\cdot)^{*}$
denote transpose and conjugate transpose,
respectively. Further, $\mathbf{I}$ denotes identity matrix with appropriate dimensions, $\mathbb{E}[\cdot]$ denotes expectation operation, $\triangleq$ denotes definition, $|\cdot|$ denotes the cardinality of a set, and $\|\cdot\|_2$ denotes the $\ell_2$ norm.

%%%%%%%%%%%%%%%%%%%%%%%%%%%%%%%%%%
% 	II. System Model and Sparse Activity Detection in Two Architectures
%%%%%%%%%%%%%%%%%%%%%%%%%%%%%%%%%%
\section{System Model and Sparse Activity Detection in Two \edf{Architectures}}
\label{sec.sysmod}
%%%%%%%%%%%%%%%%%%%%%%%%%%%%%%%%%%
% 	II.A System Model
%%%%%%%%%%%%%%%%%%%%%%%%%%%%%%%%%%
\subsection{System Model}
Consider a wireless cellular network comprising $B$ cells indexed by $1,2,\cdots,B$. Each cell contains one BS equipped with $M$ antennas at the center, serving $N$
\edww{uniformly}
distributed single-antenna users. Due to the sporadic traffic of mMTC, only a small subset of total $BN$ users in the network are active in each coherence time interval. Let $a_{bn}\in\{1,0\}$ indicate whether or not user $n$ in cell $b$ is active. We statistically model $a_{bn}$ as independent and identically distributed (i.i.d.) Bernoulli random variables with $\mathrm{Pr}(a_{bn}=1)=\lambda$ where $\lambda$ is a small constant. For the purpose of user identification and channel estimation in random access procedure, each user is assigned a unique length-$L$ signature sequence $\mathbf{s}_{bn}=[s_{bn1},
s_{bn2},\cdots,s_{bnL}]\in \mathbb{C}^{1\times L}$. Assuming that the channel is static in each coherence time interval, and all users transmit their signature sequences with the same power, the received signal $\mathbf{Y}_b \in \mathbb{C}^{L\times M}$ at BS $b$ is
%1
\begin{align}\label{eq.sys}
\mathbf{Y}_b&=\sum_{n=1}^{N}a_{bn}\mathbf{s}_{bn}^{T}\mathbf{h}_{bbn}+\sum_{j\neq b}^{}\sum_{n=1}^{N}a_{jn}\mathbf{s}_{jn}^{T}\mathbf{h}_{bjn}+\mathbf{W}_b\nonumber\\
&=\mathbf{S}_b\mathbf{X}_{bb}+\sum_{j\neq b}^{}\mathbf{S}_{j}\mathbf{X}_{bj}+\mathbf{W}_b,
\end{align}
where $\mathbf{h}_{bjn}\in \mathbb{C}^{1\times M}$ is the channel from user $n$ in cell $j$ to BS $b$, $\mathbf{W}_b\in \mathbb{C}^{M\times L}$ is the effective i.i.d.\ Gaussian noise whose variance $\sigma_{w}^2$ depends on the signal-to-noise ratio (SNR) at the BS, $\mathbf{S}_j\triangleq[\mathbf{s}_{j1}^{T},\cdots,\mathbf{s}_{jN}^{T}]\in\mathbb{C}^{L\times N}$ comprises all the sequences of users in cell $j$, and $\mathbf{X}_{bj}\triangleq[\mathbf{x}_{bj1}^{T},\cdots,\mathbf{x}_{bjN}^{T}]^{T}\in\mathbb{C}^{N\times
M}$, where $\mathbf{x}_{bjn}\triangleq a_{jn}\mathbf{h}_{bjn}\in\mathbb{C}^{1\times M}$ is the row vector of $\mathbf{X}_{bj}$. The second term corresponds to the signal from outside of cell $b$.

For massive \edzs{random access applications}, we are interested in the regime where the number of potential users per cell is much larger than the length of signature sequences, i.e., $N \gg L$. Due to the insufficient sequence dimensions, we cannot assign mutually orthogonal sequences to each potential user. Instead of assuming a carefully designed set of signature sequences or any specific sequence reuse scheme among cells, this paper assumes that the signature sequences used in the entire network are generated according to i.i.d.\ complex Gaussian distribution with zero
mean and variance $1/L$, i.e., $\mathbf{s}_{jn} \thicksim \mathcal{CN}(0,1/L), \forall j, n,$, such that each sequence is unique with unit power.

The inter-cell interference brought by
non-orthogonal signature sequences severely affects the activity detection accuracy. This paper studies two
network architectures, massive MIMO and cooperative MIMO, and compares their effectiveness in overcoming  the inter-cell interference for user activity detection.

%%%%%%%%%%%%%%%%%%%%%%%%%%%%%%%%%%
% 	II.B Sparse Activity Detection in Massive MIMO
%%%%%%%%%%%%%%%%%%%%%%%%%%%%%%%%%%
\subsection{Sparse Activity Detection in Massive MIMO}
For massive MIMO, we assume that each BS operates independently
to detect only the active users from its own cell.
In this case, inter-cell interference has to be treated as noise, because unless BSs cooperate,
it is unlikely that each BS can obtain knowledge about the out-of-cell users,
which is necessary for performing interference detection
\edww{and cancellation}.
\edww{Nevertheless,}
we can deploy a large number of antennas at each BS to combat the interference and to improve the reliability of user activity detection.

With inter-cell interference treated as noise, we re-write the received signal in \eqref{eq.sys} as
%2
\begin{align}\label{eq.model.tin}
\mathbf{Y}_b=\mathbf{S}_b\mathbf{X}_{bb}+\mathbf{\tilde{W}}_b,
\end{align}
where
\edww{the combined noise and interference}
$\mathbf{\tilde{W}}_b\triangleq \sum_{j\neq b}\mathbf{S}_{j}\mathbf{X}_{bj}+\mathbf{W}_b$
\edww{is}
approximated as Gaussian noise with covariance matrix $\tilde{\sigma}_{w}^2\mathbf{I}$ \edzr{because the the inter-cell interference consists of a large number of independent signals, which tends to the Gaussian distribution as suggested in \cite{Andrews2007}. This Gaussian approximation helps simplify the signal model and the performance analysis. We characterize the value of $\tilde{\sigma}_{w}^2$ by exploiting the statistics of the channels and the signature sequences when analyzing the impact of the inter-cell interference.}

We aim to detect the active users by identifying the non-zero rows in $\mathbf{X}_{bb}$ based on $\mathbf{Y}_b$, which corresponds to solving a compressed sensing problem with multiple measurement vectors (MMV).
\edww{To ensure}
the scalability of the problem and the tractability of the detection performance
\edww{analysis},
this paper employs the computationally efficient AMP algorithm. Note that AMP is also able to exploit the statistics of $\mathbf{X}_{bb}$. To obtain a statistical model of $\mathbf{X}_{bb}$, we model the channel as $\mathbf{h}_{bbn}=g_{bbn}\mathbf{\bar{h}}_{bbn}$, where $g_{bbn}$ is the large-scale fading coefficient assumed to be available at BS $b$, and $\mathbf{\bar{h}}_{bbn}$ is the Rayleigh fading component following $\mathcal{CN}(0,\mathbf{I})$ \edwn{uncorrelated across the antennas and} not known at the BS. \edzr{The $n$-th row of $\mathbf{X}_{bb}$ then can be written as $\mathbf{x}_{bbn}=a_{bn}g_{bbn}\mathbf{\bar{h}}_{bbn}$. By noting that $a_{bn}$ follows the Bernoulli distribution with $\mathrm{Pr}(a_{bn}=1)=\lambda$
in an i.i.d.\ fashion, $\mathbf{x}_{bbn}$ can be statistically modeled by the following mixed Bernoulli-Gaussian distribution}
%3
\begin{align}\label{eq.distri.bg}
\mathbf{x}_{bbn} \thicksim (1-\lambda)\delta_{\mathbf{0}}+\lambda\mathcal{CN}(0,g_{bbn}^2\mathbf{I}),
\end{align}
where the Gaussian distribution is parameterized by $g_{bbn}$ and $\delta_{\mathbf{0}}$ is a point mass at $\mathbf{0}$.

%%%%%%%%%%%%%%%%%%%%%%%%%%%%%%%%%%
% 	II.C Sparse Activity Detection in Cooperative MIMO
%%%%%%%%%%%%%%%%%%%%%%%%%%%%%%%%%%
\subsection{Sparse Activity Detection in Cooperative MIMO}
For cooperative MIMO, we assume an additional CU deployed in the network with fronthaul connections to all $B$ BSs, forming a C-RAN system. There are different ways to exploit C-RAN to enable cooperation. \edzr{For example, the authors in \cite{Simeone2016} consider two strategies:} \edzs{a C-RAN architecture with} \edzr{functional split, in which a preliminary detection is performed at each BS and the detection results are forwarded to the CU, where final decisions on user activity are carried out;} \edzs{and a centralized C-RAN architecture with} \edzr{centralized cooperation, in which all the received signals are collected and processed at the CU}. This paper adopts a functional split strategy. This paper does not consider centralized cooperation due to its higher requirements on
\edww{the}
computation capability
\edww{at the CU
and the capacity of the fronthaul}.

%%%%%%%%%%%%%%%%%%%%%%%%%%%%%%%%%%
% 	II.C.1 Preliminary User Activity Detection at Each BS
%%%%%%%%%%%%%%%%%%%%%%%%%%%%%%%%%%
\subsubsection{Preliminary User Activity Detection at Each BS}
Under the C-RAN architecture, it is likely that each BS can obtain some information about the out-of-cell users, e.g., their signature sequences and channel statistics, which allow the BS to exploit the inter-cell interference to detect the active users not only from its own cell, but also from several neighboring cells. As an example, we consider a special case where each BS seeks to detect all active users in the network. The received signal in \eqref{eq.sys} can be re-written as
%4
\begin{align}\label{eq.sys2}
\mathbf{Y}_b &= \left[\begin{array}{ccc}\mathbf{S}_1,&\cdots,&\mathbf{S}_B\end{array}\right]\left[\begin{array}{c}\mathbf{X}_{b1}\\\vdots\\\mathbf{X}_{bB}\end{array}\right]+\mathbf{W}_b\nonumber\\
&\triangleq \mathbf{S}\edwER{\mathbf{X}_b}+\mathbf{W}_b, %\edwER{\text{Wei is suggesting to use $\tilde{X}_b$}}
\end{align}
which is
a \edww{\emph{recovering-inter-cell-interference}} strategy rather than a \edww{\emph{treating-interference-as-noise}} strategy.
This paper analyzes these two strategies
and shows that recovering inter-cell interference brings improvement as compared to treating interference as noise.

Note that \eqref{eq.sys2} also corresponds to a compressed sensing problem, thus still allowing the use of the AMP algorithm. To exploit the channel statistics, we model the channel from any user in the network to BS $b$ as
$\mathbf{h}_{bjn}=g_{bjn}\mathbf{\bar{h}}_{bjn}$, where the  value of the large-scale fading coefficient $g_{bjn}$ is assumed to be available at BS $b$. Each row of $\mathbf{X}_{bj}$ follows a distribution similar to \eqref{eq.distri.bg} as
\begin{align}\label{eq.distri.bg.2}
\mathbf{x}_{bjn} \thicksim (1-\lambda)\delta_{\mathbf{0}}+\lambda\mathcal{CN}(0,g_{bjn}^2\mathbf{I}),
\end{align}

%%%%%%%%%%%%%%%%%%%%%%%%%%%%%%%%%%
% 	II.C.2 Final Decision at the CU
%%%%%%%%%%%%%%%%%%%%%%%%%%%%%%%%%%
\subsubsection{Final Decision at the CU}
Based on the preliminary user activity detection phase, each BS makes soft decision on the activities of all users in the network in the form of LLR, and forwards those LLRs to the CU, where final decisions on the user activities are performed via LLR aggregation.
\edww{Note that due to the
constraint on the fronthaul capacity
in practice the LLR values need to be quantized.}

In this paper, we design the AMP-based user activity detection algorithms for the massive MIMO system as well as the cooperative MIMO system, and analytically characterize the detection error for both scenarios.
\edww{The analytic results allow an efficient performance comparison of the two cases via numerical methods.}

%%%%%%%%%%%%%%%%%%%%%%%%%%%%%%%%%%
% 	III. AMP-based Activity Detection for Massive MIMO and Cooperative MIMO
%%%%%%%%%%%%%%%%%%%%%%%%%%%%%%%%%%
\section{AMP-based Activity Detection for Massive MIMO and Cooperative MIMO}
\label{sec.amp}
This section presents the AMP-based user activity detection algorithms for massive MIMO and cooperative MIMO. Since the user activity detection in both scenarios involves solving a sparse recovery problem at each BS, we first discuss the AMP algorithm
\edww{for compressed sensing.}
After an estimate of $\mathbf{X}_{bb}$ or
\edwER{$\mathbf{X}_b$}
is obtained, we discuss how to perform user activity detection based on
\edww{the}
LLRs in the massive MIMO case, and how to perform LLR forwarding and LLR aggregation for final decisions at the CU in the cooperative MIMO case, respectively.

%%%%%%%%%%%%%%%%%%%%%%%%%%%%%%%%%%
% 	III.A AMP with Bayesian Denoiser
%%%%%%%%%%%%%%%%%%%%%%%%%%%%%%%%%%
\subsection{AMP with Bayesian Denoiser}
AMP is an iterative algorithm originally proposed in \cite{Donoho2009} and has been extended
\edww{in \cite{Rangan2011, Kim2011, Maleki2013, Ziniel2013} for various types of}
sparse recovery problems. In this paper, we adopt a variant of AMP used in \cite{Kim2011,Chen2018,Liu2018}.
\edz{In this section, we first consider the use of AMP for massive MIMO, then discuss the AMP algorithm for cooperative MIMO.}

To recover the row sparse matrix $\mathbf{X}_{bb}$ \edz{in \eqref{eq.model.tin} for massive MIMO}, AMP starts with $\mathbf{X}^0=\mathbf{0}$ and $\mathbf{Z}^0=\mathbf{Y}_b$, and proceeds
\edww{in}
each iteration as
\begin{align}
\mathbf{X}^{t+1}&=\eta_t(\mathbf{S}_b^{*}\mathbf{Z}^{t}+\mathbf{X}^{t}),\label{eq.vamp1}\\
\mathbf{Z}^{t+1} &= \mathbf{Y}_b-\mathbf{S}_b\mathbf{X}^{t+1}+ \frac{N}{L}\mathbf{Z}^{t}\langle\eta^{\prime}_t(\mathbf{S}_b^{*}\mathbf{Z}^{t}+\mathbf{X}^{t})\rangle, \label{eq.vamp2}
\end{align}
where $t=0,1,\cdots$ is the iteration index, $\mathbf{X}^{t}$ is
the estimate of $\mathbf{X}_{bb}$ at iteration $t$, $\mathbf{Z}^{t}$ is the residual, $\eta_t(\cdot)\triangleq [\eta_{t}(\cdot,g_{bb1}),\cdots,\eta_{t}(\cdot,g_{bbN})]^{T}$, with $\eta_{t}(\cdot,g_{bbn}): \mathbb{C}^{1\times M}\rightarrow \mathbb{C}^{1\times M}$ being an appropriately designed non-linear function known as \emph{denoiser} that operates on the $n$th row of $\mathbf{S}_b^{*}\mathbf{Z}^{t}+\mathbf{X}^{t}$, with $g_{bbn}$ as the parameter, $\eta^{\prime}_{t}(\cdot)\triangleq[\eta_{t}^{\prime}(\cdot,g_{bb1}),\cdots,\eta_{t}^{\prime}(\cdot,g_{bbN})]^{T}$, with $\eta_{t}^{\prime}(\cdot,g_{bbn})$ being the first order derivative of $\eta_{t}(\cdot,g_n)$, and $\langle\cdot\rangle$ is the \edzr{sample mean} of all $N$ derivatives.

A useful property of AMP is that the matched filtered output
$\mathbf{\tilde{X}}^{t}\triangleq\mathbf{S}_b^{*}\mathbf{Z}^{t}+\mathbf{X}^{t}$
in \eqref{eq.vamp1}
can be statistically modeled as
\edww{a linear model of the
signal itself \edff{plus} the
noise-plus-multiuser-interference term}, i.e., $\mathbf{\tilde{X}}^{t}=\mathbf{X}_{bb}+\mathbf{V}^{t}$,
where the row vectors of $\mathbf{V}^{t}$ are Gaussian with covariance matrix $\mathbf{\Sigma}_{t}$, which can be tracked in the asymptotic regime where $L,N\rightarrow\infty$, with their ratio $L/N$ fixed, via the state evolution. \edzr{To describe the state evolution conveniently, we first introduce a few random variables $\mathbf{R}\in\mathbb{C}^{1\times M}$, $\mathbf{U}^t\in\mathbb{C}^{1\times M}$ and $G_b\in\mathbb{R}$, where $\mathbf{R}$ follows the same distribution as the row vectors of $\mathbf{X}_{bb}$, $\mathbf{U}^t$ follow the Gaussian distribution with zero mean and covariance matrix $\mathbf{\Sigma}_{t}$, and $G_b$ follows the distribution of the large-scale fading coefficient $g_{bbn}, \forall n$} \edzs{assuming that the} \edzr{users in cell $b$ are} \edzs{uniformly} \edzr{and independently located} \edzs{in the cell}. \edzr{The state evolution can} \edzs{then} \edzr{be expressed as}
\begin{align}\label{eq.se}
\mathbf{\Sigma}_{t+1} = \tilde{\sigma}_{w}^2\mathbf{I}+\frac{N}{L}\mathbb{E}\left[\mathbf{D}^{t}(\mathbf{D}^{t})^{*}\right],
\end{align}
\edzr{where $\mathbf{D}^{t}\triangleq \left(\eta_{t}(\mathbf{R}+\mathbf{U}^{t},G_b)-\mathbf{R}\right)^{T}\in \mathbb{C}^{M\times 1}$, and the expectation is taken with respect to all random variables $\mathbf{R}$, $\mathbf{U}^{t}$, and $G_b$.}

Since $\mathbf{\Sigma}_{t+1}$ contains the statistical information on the noise-plus-multiuser-interference term $\mathbf{V}^{t+1}$, we can characterize the performance of AMP from $\mathbf{V}^{t+1}$, whose limiting value as $t\rightarrow \infty$ is given by the fixed point of the equation \eqref{eq.se}. Moreover, it
is shown in \cite{Chen2018,Liu2018} that with the
\edwn{statistical} model of $\mathbf{X}_{bb}$ in \eqref{eq.distri.bg}, where the row vectors of $\mathbf{X}_{bb}$ are drawn from \edwn{an i.i.d.\ }Bernoulli-Gaussian distribution \edwn{and uncorrelated across the antennas}, $\mathbf{\Sigma}_{t+1}$ stays as a diagonal matrix with identical diagonal entries at each iteration, i.e., $\mathbf{\Sigma}_{t+1}=\tau_{t+1}^2\mathbf{I}$, \edzr{which simplifies the state evolution to a one-dimension equation as
$\tau_{t+1}^2\mathbf{I} = \tilde{\sigma}_{w}^2\mathbf{I}+NL^{-1}\mathbb{E}\left[\mathbf{D}^{t}(\mathbf{D}^{t})^{*}\right]$ that
can be easily tracked.}

Now we discuss the design of $\eta_{t}(\cdot,g_{bbn})$ by exploiting the statistics of $\mathbf{X}_{bb}$ in \eqref{eq.distri.bg}, and the Gaussianity of $\mathbf{V}^{t}$. Recall that the $n$th row vector of the matched filtered output $\mathbf{\tilde{X}}^{t}\triangleq\mathbf{S}_b^{*}\mathbf{Z}^{t}+\mathbf{X}^{t}$ in \eqref{eq.vamp1} can be modeled as $\mathbf{\tilde{x}}_{bbn}=\mathbf{x}_{bbn} + \mathbf{v}_{n}^{t}$, where $\mathbf{v}_{n}^{t}$ is drawn from $\mathcal{CN}(0,\tau_t^2\mathbf{I})$. The Bayesian denoiser $\eta_{t}(\cdot,g_{bbn})$ operating on $\mathbf{\tilde{x}}_{bbn}$ is \edzr{then designed as the conditional expectation of $\mathbf{x}_{bbn}$ given $\mathbf{\tilde{x}}_{bbn}$} as follows
%9
\cite{Kim2011}
\begin{align}\label{eq.denoiser}
\eta_{t}(\mathbf{\tilde{x}}_{bbn}^{t},g_{bbn})&=\frac{\theta_{bbn}(1+\theta_{bbn})^{-1}\mathbf{\tilde{x}}_{bbn}^{t}}{1+\frac{1-\lambda}{\lambda}(1+\theta_{bbn})^{M}\exp(-\Delta_{bbn}\|\mathbf{\tilde{x}}_{bbn}^{t}\|_2^2)},
\end{align}
where $\theta_{bbn}\triangleq g_{bbn}^2\tau_t^{-2}$, and $\Delta_{bbn}\triangleq \tau_t^{-2}-(g_{bbn}^2+\tau_t^2)^{-1}$.  \edzr{It is worth noting that \eqref{eq.denoiser} is obtained based on the assumption of uncorrelated Rayleigh fading channels. In the case of uncorrelated channel model or other more sophisticated channel models, the approach of AMP is applicable with a re-designed $\eta_{t}(\cdot,g_{bbn})$ to incorporate the new channel statistics.}

The algorithm discussed above can be extended to solve the problem \eqref{eq.sys2} by simply replacing $\mathbf{X}_{bb}$, $\mathbf{S}_b$, $\eta_{t}(\cdot,g_{bbn})$, $\mathbf{x}_{bbn}$, and $\mathbf{\tilde{x}}_{bbn}$ with $\mathbf{X}_{b}$, $\mathbf{S}$, $\eta_{t}(\cdot,g_{bjn})$, $\mathbf{x}_{bjn}$, and $\mathbf{\tilde{x}}_{bjn}$, respectively.
It is worth noting that due to the different way to deal with the inter-cell interference as well as the increase in dimensions in \eqref{eq.sys2}, the state evolution in the cooperative MIMO case \edww{\eqref{eq.sys2}} is
%10
\begin{align}\label{eq.se.2}
\mathbf{\Sigma}_{t+1} = \sigma_{w}^2\mathbf{I}+ \frac{NB}{L}\mathbb{E}\left[\mathbf{D}^{t}(\mathbf{D}^{t})^{*}\right],
\end{align}
where, with slightly abuse of notation, $\mathbf{D}^{t}\triangleq \left(\eta_{t}(\mathbf{R}+\mathbf{U}^{t},G)-\mathbf{R}\right)^{T}$, \edzr{with random variable $G$ following the distribution of the large-scale fading coefficient $g_{bjn}, \forall j, n$ assuming that the users in the entire network are uniformly and independently located in all the cells. This differs from \eqref{eq.se} where we use $G_b$ to statistically model the large-scale fading coefficients of only the users within cell $b$.} A second difference is that $\sigma_{w}^2$ in \eqref{eq.se.2} represents the strength of only the background noise, while $\tilde{\sigma}_{w}^2$ in \eqref{eq.se} represents the strength of the background noise plus the inter-cell interference.

%%%%%%%%%%%%%%%%%%%%%%%%%%%%%%%%%%
% 	III.B User Activity Detection in Massive MIMO
%%%%%%%%%%%%%%%%%%%%%%%%%%%%%%%%%%
\subsection{User Activity Detection in Massive MIMO}
We
\edww{now apply the above analysis of AMP to the user activity detection problem in massive MIMO case.}
After AMP converges, we adopt LLR test to decide on the activity of each user.
In the massive MIMO system, each BS only seeks to detect the active users from its own cell. The LLR of user $n$ from cell $b$ is
computed at BS $b$ based on the matched filtering output $\mathbf{\tilde{x}}_{bbn}^{t}$. If user $n$ from cell $b$ is inactive, i.e., $a_{bn}=0$,
the likelihood of observing $\mathbf{\tilde{x}}_{bbn}^{t}$ at iteration $t$ is
%11
\begin{align}\label{eq.likeli1}
p(\mathbf{\tilde{x}}_{bbn}^{t}|a_{bn}=0)=\frac{\exp\left(-\|\mathbf{\tilde{x}}_{bbn}^{t}\|_2^2\tau_{t}^{-2}\right)}{\pi^M \tau_{t}^{2M}},
\end{align}
\edf{where we use
$\mathbf{\tilde{x}}_{bbn}=\mathbf{x}_{bbn} + \mathbf{v}_{n}^{t}$ in which
$\mathbf{x}_{bbn}$ is Bernoulli-Gaussian and $\mathbf{v}_{n}^{t}$ is Gaussian.}
In the case that the user is inactive, i.e., $a_{bn}=1$, the likelihood of observing $\mathbf{\tilde{x}}_{bbn}^{t}$ at iteration $t$ is
%12
\begin{align}\label{eq.likeli2}
p(\mathbf{\tilde{x}}_{bbn}^{t}|a_{bn}=1)=\frac{\exp\left(-\|\mathbf{\tilde{x}}_{bbn}^{t}\|_2^2(\tau_{t}^2+g_{bbn}^2)^{-1}\right)}{\pi^M (\tau_{t}^2+g_{bbn}^2)^{M}}.
\end{align}
The LLR for user $n$ from cell $b$ is obtained at BS $b$ as
%13
\begin{align}\label{eq.llr}
\operatorname{LLR}_{bbn} &= \log\left(\frac{p(\mathbf{\tilde{x}}_{bbn}^{t}|a_{bn}=1)}{p(\mathbf{\tilde{x}}_{bbn}^{t}|a_{bn}=0)}\right)\nonumber\\ &=\|\mathbf{\tilde{x}}^{t}_{bbn}\|_2^2\Delta_{bbn} - M\log(1+\theta_{bbn}),
\end{align}
where
\edww{again}
$\theta_{bbn}= g_{bbn}^2\tau_t^{-2}$, and $\Delta_{bbn}= \tau_t^{-2}-(g_{bbn}^2+\tau_t^2)^{-1}$.
Since the large-scale fading coefficient $g_{bbn}$ is assumed to be known at the BSs and $\tau_{t}$
is a constant parameter determined by the state evolution, it can be seen that the LLR expression in \eqref{eq.llr} is only a function of  $\|\mathbf{\tilde{x}}_{bbn}^{t}\|_2^2$.
Further, by observing that $\operatorname{LLR}_{bbn}$ is monotonic in $\|\mathbf{\tilde{x}}_{bbn}^{t}\|_2^2$, we can set a threshold $l_{bn}$ on $\|\mathbf{\tilde{x}}_{bbn}^{t}\|_2^2$ to make a hard decision on the user activity, i.e., user $n$ from cell $b$ is declared to be active if $\|\mathbf{\tilde{x}}_{bbn}^{t}\|_2^2 > l_{bn}$; otherwise, it is declared to be inactive.

%%%%%%%%%%%%%%%%%%%%%%%%%%%%%%%%%%
% 	III.C User Activity Detection in Cooperative MIMO
%%%%%%%%%%%%%%%%%%%%%%%%%%%%%%%%%%
\subsection{User Activity Detection in Cooperative MIMO}
In cooperative MIMO, the user activity is detected based on not only the BS from its own cell, but also the BSs from the neighboring cells. For ease of illustration, we assume that each BS seeks to detect the active users in the entire network, \edz{so that} for user $n$ from cell $b$, the LLRs are obtained at all BSs. Similar to \eqref{eq.llr}, the LLR at BS $j$ can be expressed as
%14
\begin{align}\label{eq.llr.coop}
\operatorname{LLR}_{jbn}=\|\mathbf{\tilde{x}}^{t}_{jbn}\|_2^2\Delta_{jbn} - M\log(1+\theta_{jbn}),
\end{align}
\edf{where $\theta_{jbn}\triangleq g_{jbn}^2\tau_t^{-2}$ and $\Delta_{jbn}\triangleq \tau_t^{-2}-(g_{jbn}^2+\tau_t^2)^{-1}$.
Note that the value of $\tau_{t}$ contained in \eqref{eq.llr.coop} is different from that in \eqref{eq.llr} due to the
\edww{difference in the}
received signal models.}

In the cooperative MIMO system, the final decisions on user activity are carried out at the CU, hence the BSs do not need to make hard decisions. Instead, each BS forwards the LLRs to the CU. We first consider an ideal case where LLRs can be forwarded to the CU perfectly. We deal with the non-ideal case in the subsequent subsection.

To make the LLR forwarding more flexible, we assume that each BS, say BS $b$, is able to select a subset of all the users whose LLRs are forwarded to the CU. This is motivated by the fact that for the users far away from BS $b$, the LLRs obtained by BS $b$ may not be very useful at the CU. We describe the BS-user association from a user-centric perspective, in which each user selects a subset of closest BSs for LLR forwarding.
Specifically, for user $n$ from cell $b$, the set of BSs that forward the LLRs of that user to the CU is denoted as $\mathcal{B}_{bn}\subseteq \{1,\cdots,B\}$ with $B_{bn}=|\mathcal{B}_{bn}|$.

Suppose that the LLRs of user $n$ \edf{in} cell $b$ obtained at $\mathcal{B}_{bn}$ are gathered at the CU.
We treat the inter-cell interference at each BS as independent
\edww{so}
that the collected LLRs can be regarded as independent samples, hence the aggregated LLR is the summation of
\edf{those LLRs}
as
%15
\begin{align}\label{eq.llr.agg}
\operatorname{LLR}_{bn}^{\mathrm{AG}}=\sum_{j\in\mathcal{B}_{bn}}\left(\Delta_{jbn}\|\mathbf{\tilde{x}}_{jbn}\|_2^2
- M\log(1+\theta_{jbn})\right).
\end{align}
Note that $\operatorname{LLR}_{bn}^{\mathrm{AG}}$ is monotonic in $\sum_{j\in\mathcal{B}_{bn}}\Delta_{jbn}\|\mathbf{\tilde{x}}_{jbn}\|_2^2$, which is a weighted sum of $\|\mathbf{\tilde{x}}_{jbn}\|_2^2$, and the weights depend on the large-scale fading coefficients $g_{jbn}, j\in\mathcal{B}_{bn}$ and the parameter $\tau_{t}$. Therefore,
\edww{hard decision}
on user activity at the CU can be performed on $\sum_{j\in\mathcal{B}_{bn}}\Delta_{jbn}\|\mathbf{\tilde{x}}_{jbn}\|_2^2$ with some threshold $l_{bn}^{\mathrm{CO}}$, i.e., user $n$ from cell $b$ is declared to be active if $\sum_{j\in\mathcal{B}_{bn}}\Delta_{jbn}\|\mathbf{\tilde{x}}_{jbn}\|_2^2> l_{bn}^{\mathrm{CO}}$; otherwise, it is declared to be inactive.

%%%%%%%%%%%%%%%%%%%%%%%%%%%%%%%%%%
% 	III.D LLR Quantization Based on User-Specific Codebooks in Cooperative MIMO
%%%%%%%%%%%%%%%%%%%%%%%%%%%%%%%%%%
\subsection{LLR Quantization Based on User-Specific Codebooks in Cooperative MIMO}
\label{sec.quant}
Due to the limited capacity of fronthaul links in practice, this section considers the
\edww{design of quantization scheme for LLRs} at each BS. Since LLR is a real scalar, let $\mathcal{Q}(\cdot;\mathcal{A})$ denote \edww{a}
quantizer applied to a real scalar with a predefined quantization codebook $\mathcal{A}$. Note that the possible value of $\operatorname{LLR}_{jbn}$ depends on both $g_{jbn}$ and $\tau_{t}$, indicating that the dynamic range of $\operatorname{LLR}_{jbn}$ is user-specific.
\edf{Hence,} a common quantization codebook at BS $j$ for all $\operatorname{LLR}_{jbn}, \forall b,n$, may not properly capture the LLRs of all users when the number of quantization bits is small.

We design user-specific codebooks that
\edww{consider}
the distribution of $g_{jbn}$.
Note that $\Delta_{jbn}$ and $\theta_{jbn}$ in the expression of LLR in \eqref{eq.llr.coop} can be seen as constants, which are assumed to be known at the CU.
\edf{Rather than} \edww{quantizing the LLR itself, we choose to quantize $\|\mathbf{\tilde{x}}_{jbn}\|_2^2$ instead for simplicity. We express the quantizer as}
%16
\begin{align}
y = \mathcal{Q}(\|\mathbf{\tilde{x}}_{jbn}\|_2^2; \mathcal{A}_{jbn}),
\end{align}
where codebook $\mathcal{A}_{jbn}$ is designed by exploiting the statistics of $\|\mathbf{\tilde{x}}_{jbn}\|_2^2$.
\edww{Ideally, since}
$\mathbf{\tilde{x}}_{jbn}=\mathbf{x}_{jbn} + \mathbf{v}_{n}^{t}$, we observe that $\mathbf{\tilde{x}}_{jbn}$ follows a mixture Gaussian distribution, depending on whether user $n$ from cell $b$ is active or not, as
\begin{align}
\mathbf{\tilde{x}}_{jbn} \thicksim (1-\lambda)\mathcal{CN}(0,\tau_{t}^2\mathbf{I})+\lambda\mathcal{CN}(0,(g_{jbn}^2+\tau_{t}^2)\mathbf{I}),
\end{align}
\edww{so the}
Lloyd's algorithm can be employed to generate the codebook for $\|\mathbf{\tilde{x}}_{jbn}\|_2^2$.
\edf{However, this approach requires
\edww{a different}
$\mathcal{A}_{jbn}$
for each
\edww{$(b,n)$,}}
which can be expensive from the viewpoint of computational complexity or storage cost. Instead, we consider a simple uniform quantization scheme, which
\edww{turns out}
to be very effective
\edww{as shown}
in simulations in Section~\ref{sec.simu.quant}.

\edf{We design the}
\edww{uniform}
codebooks for different users with different LLR dynamic ranges as
%18
\begin{align}
\mathcal{A}_{jbn}\triangleq\left\{\frac{l_{max}^{jbn}}{2^{Q+1}},\frac{3l_{max}^{jbn}}{2^{Q+1}},\cdots,
\frac{(2^{Q+1}-1)l_{max}^{jbn}}{2^{Q+1}}\right\},
\end{align}
where $Q$ is the number of quantization bits, $l_{max}^{jbn}$
\edww{is chosen to capture a large}
percentage of the dynamic range of $\|\mathbf{\tilde{x}}_{jbn}\|_2^2$,
\edww{i.e., to satisfy}
%19
\begin{align}
\mathrm{Pr}(\|\mathbf{\tilde{x}}_{jbn}\|_2^2\leq l_{max}^{jbn})=\zeta,
\end{align}
where $\zeta$ is a predefined constant close to one.
\edww{Note that}
the value of $l_{max}^{jbn}$ can be pre-computed accordingly as a look-up table.

\edww{The} value of $\zeta$ should be properly chosen based on the tail probability of $\|\mathbf{\tilde{x}}_{jbn}\|_2^2$\edwn{,}
\edww{especially when $Q$ is small}.
If $\zeta$ is very close to one, the range $[0,l_{max}^{jbn}]$ may become too large which results in quantization levels that cannot capture the small values of $\|\mathbf{\tilde{x}}_{jbn}\|_2^2$ especially when the user is inactive; if $\zeta$ is not sufficiently close to one, the range $[0,l_{max}^{jbn}]$ may not properly cover the possible large values of $\|\mathbf{\tilde{x}}_{jbn}\|_2^2$ when the user is active. In general, larger $\zeta$
\edww{should be chosen if $Q$ is large, or if the}
tail of $\|\mathbf{\tilde{x}}_{jbn}\|_2^2$
\edww{falls to zero quickly.}
It is worth noting that the tail of $\|\mathbf{\tilde{x}}_{jbn}\|_2^2$
\edww{falls more quickly}
when the number of antennas $M$ increases.

Note that the proposed uniform quantization scheme is not necessarily optimal in terms of user activity detection performance. However, simulation results in Section~\ref{sec.simu.quant} reveal that for typical cellular systems with massive devices, such a simple design with three or four quantization bits can already approach the performance of the infinite fronthaul capacity case.

%%%%%%%%%%%%%%%%%%%%%%%%%%%%%%%%%%
% 	IV. Performance Analysis of User Activity Detection in Massive MIMO
%%%%%%%%%%%%%%%%%%%%%%%%%%%%%%%%%%
\section{\edzr{Performance Analysis of User Activity Detection in Massive MIMO and Cooperative MIMO}}
\label{sec.massive}
\edf{This section characterizes the probabilities of false detection and missed detection for
user activity detection in massive MIMO and cooperative MIMO.} \edzr{The main tool used in the analysis is the state evolution described in \eqref{eq.se} and \eqref{eq.se.2}. Although the state evolution holds asymptotically in the regime where $L,N\rightarrow\infty$ with fixed ratio ratio $L/N$, it provides accurate performance predictions for large but finite $L$ and $N$, which is the typical setup in mMTC with a large number low-mobility and low-rate devices.

In this section, we first derive explicit expressions for the state evolution in massive MIMO and cooperative MIMO and analytically compare them. We then characterize the detection error probabilities for both cases, based on which we finally discuss the asymptotic properties. }

%%%%%%%%%%%%%%%%%%%%%%%%%%%%%%%%%%
% 	IV.A State Evolution Analysis
%%%%%%%%%%%%%%%%%%%%%%%%%%%%%%%%%%
\subsection{\edzr{State Evolution Analysis}}
We observe from
\edww{the}
LLR expressions in \eqref{eq.llr} and \eqref{eq.llr.coop} that the parameter $\tau_{t}$ plays an important role, whose converged value $\tau_{\infty}$ is determined by the fixed point of the state evolution in \eqref{eq.se} \edzr{for massive MIMO, or \eqref{eq.se.2} for cooperative MIMO.} This subsection aims to derive explicit expressions for the state evolution to evaluate $\tau_{\infty}$ for both cases.

\subsubsection{Massive MIMO}
The variance of the noise-plus-interference
\edww{term in \eqref{eq.se}}
, $\tilde{\sigma}_w^2$, can be obtained by
first computing the second-order statistics of the inter-cell interference as
%20
\begin{align}\label{eq.inf}
\mathbf{C}_{\mathrm{inf}}&\triangleq\mathbb{E}\left[\mathrm{vec}\left(\sum_{j\neq b}^{}\mathbf{S}_{j}\mathbf{X}_{bj}\right)\mathrm{vec}\left(\sum_{j\neq b}^{}\mathbf{S}_{j}\mathbf{X}_{bj}\right)^{*}\right]\nonumber\\
&\overset{}{=}\frac{\lambda}{L}N(B-1)\mathbb{E}\left[G_{/b}^2\right]\mathbf{I},
\end{align}
where
\edf{we use
$\mathbb{E}\left[\mathbf{s}_{jn}\mathbf{s}_{jn}^{*}\right]=L^{-1}\mathbf{I}$, $\mathbb{E}[\mathbf{x}_{bjn}^{T}(\mathbf{x}_{bjn}^{T})^{*}]=\lambda\mathbb{E}[G_{/b}^2]\mathbf{I}$,
\edww{and}
 random variable $G_{/b}$ follows the distribution of the large-scale fading coefficient $g_{bjn}, j\neq b, \forall n$ \edzr{assuming that the users outside cell $b$ are uniformly and independently located.} We then obtain the expression for $\tilde{\sigma}_w^2$ as}
 %21
\begin{align}\label{eq.noise.tin}
\tilde{\sigma}_w^2=\frac{\lambda}{L}N(B-1)\mathbb{E}\left[G_{/b}^2\right]+\sigma_{w}^2.
\end{align}

To further
evaluate $\mathbb{E}[G_{/b}^2]$, we need to characterize
\edf{the probability density function (PDF)}
of $G_{/b}$. For simplicity, we assume that cell $b$ is located at the center of the network, and approximate the region of the other $B-1$ cells as a disc-shape region around cell $b$. The distribution of the large-scale fading coefficients of the users in the disc-shape region is modeled as follows.
%%%%%%%%%%%%%%%%%%%%%%%
%%%%%%%%%%%%%%%%%%%%%%% PROPOSITION 1
\begin{proposition}\label{prop1}
\edww{Consider a set of}
users
uniformly distributed in a disc-shape region of maximum radius $R_{max}$ and minimum radius $R_{min}$ with one BS located at the center.
\edww{For a user at distance $d$ from the BS, we model the
large-scale fading (in dB)}
as $\alpha+\beta\log_{10}(d)$.
\edww{Then, the}
\edf{PDF}
of the large-scale fading coefficients
\edww{of this set of users}
 is given by
 %22
\begin{align}\label{eq.dist.g}
p(g)=\frac{ag^{-\gamma}}{R_{max}^{2}-R_{min}^{2}},\quad g \in [\epsilon_{min},\epsilon_{max}]
\end{align}
where $\epsilon_{min}=10^{-(\alpha+\beta\log_{10}(R_{max}))/20}$, $\epsilon_{max}=10^{-(\alpha+\beta\log_{10}(R_{min}))/20}$, $a=40\beta^{-1}10^{-2\alpha/\beta}$, and $\gamma=40\beta^{-1}+1$.
\end{proposition}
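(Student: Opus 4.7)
The plan is a standard one-dimensional change of variables from the distance random variable to the large-scale fading random variable. First I would write down the distribution of the user's distance $D$ from the BS. Since the users are uniformly distributed in the annular region between $R_{min}$ and $R_{max}$, the area element in polar coordinates gives the PDF
\begin{align*}
p_D(d) = \frac{2d}{R_{max}^{2}-R_{min}^{2}}, \quad d\in[R_{min},R_{max}].
\end{align*}

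Next I would translate the dB path-loss model into an explicit deterministic map from distance to the large-scale fading amplitude $g$. Because $g$ is an amplitude coefficient (the row variance in \eqref{eq.distri.bg} involves $g^2$, which should equal the linear-scale path loss), the relation is $g=10^{-(\alpha+\beta\log_{10}(d))/20}$, i.e., divide the dB value by $20$ rather than $10$. This immediately gives the endpoints $g(R_{max})=\epsilon_{min}$ and $g(R_{min})=\epsilon_{max}$ claimed in the proposition. Solving for $d$ yields the closed form
\begin{align*}
d(g) = 10^{-\alpha/\beta}\, g^{-20/\beta},
\end{align*}
which is monotonically decreasing, so the change-of-variable formula applies without sign issues.

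Then I would compute the Jacobian $\left|\tfrac{dd}{dg}\right| = \tfrac{20}{\beta}\,10^{-\alpha/\beta} g^{-20/\beta - 1}$ and apply $p(g) = p_D(d(g))\,\bigl|\tfrac{dd}{dg}\bigr|$. Substituting $d(g)$ into $p_D$ produces one factor of $10^{-\alpha/\beta} g^{-20/\beta}$, which combines with the Jacobian to give an overall constant $\tfrac{40}{\beta}\,10^{-2\alpha/\beta}$ and a power $g^{-40/\beta - 1}$. Identifying $a=\tfrac{40}{\beta}10^{-2\alpha/\beta}$ and $\gamma=\tfrac{40}{\beta}+1$ then yields the claimed expression \eqref{eq.dist.g}, with the support automatically $[\epsilon_{min},\epsilon_{max}]$ from the endpoint computation above.

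There is no real obstacle here; the argument is a routine transformation of a random variable. The only point that requires care, and the one place a reader could slip, is the factor-of-two in the exponent: because the path loss expression in dB refers to received power, the amplitude-like coefficient $g$ carries the exponent $-(\alpha+\beta\log_{10}(d))/20$, which is what ultimately produces the $40/\beta$ in $a$ and $\gamma$ (as opposed to a $20/\beta$ that one would get by misinterpreting $g$ as a power gain). Flagging this convention explicitly at the start of the proof and verifying it against the usage $g^2$ in \eqref{eq.distri.bg} is the only substantive remark; the rest is algebra.
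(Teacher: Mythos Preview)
Your proposal is correct and follows essentially the same route as the paper: start from the area-based PDF $p_D(d)=2d/(R_{max}^2-R_{min}^2)$ for a uniform annular distribution, then apply the one-dimensional change of variables through $g=10^{-(\alpha+\beta\log_{10}d)/20}$ to obtain the power-law density, identifying $a$ and $\gamma$ at the end. Your explicit Jacobian computation and the remark on the $20$-versus-$10$ convention are helpful elaborations, but the underlying argument is identical to the paper's.
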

\begin{proof}
Please see Appendix~\ref{A:propA}.
\end{proof}

By setting the values of $R_{min}$ and $R_{max}$ to the radius of one cell $R_{cell}$ and the radius of the whole network $R_{net}$, respectively, we can obtain the probability distribution function of $G_{/b}$, and further evaluate $\mathbb{E}[G_{/b}^2]$ as
%23
\begin{align}\label{eq.inf.expr}
\mathbb{E}[G_{/b}^2] &\approx \int_{\epsilon_{1}}^{\epsilon_{2}}\frac{ag^{-\gamma}g^2}{R_{net}^{2}-R_{cell}^{2}}dg\nonumber\\
&= \frac{10^{-\alpha/10}\left(R_{cell}^{2-\beta/10}-R_{net}^{2-\beta/10}\right)}{(1-\beta/20)\left(R_{cell}^2-R_{net}^2\right)},\quad \beta>20
\end{align}
where the approximation comes from the circular coverage rather than hexagonal coverage of the cells in the network, and $\epsilon_{1}=10^{-(\alpha+\beta\log_{10}(R_{net}))/20}$, $\epsilon_{2}=10^{-(\alpha+\beta\log_{10}(R_{cell}))/20}$. Simulation results in Section~\ref{sec.simu} show that the approximation error is negligible.

We can also use \eqref{eq.dist.g} to characterize the distribution of the large-scale fading coefficients of users inside cell $b$, i.e., $G_b$, by setting the values of $R_{min}$ and $R_{max}$ to $0$ and $R_{cell}$, respectively.

\edf{Now, by using the results in \eqref{eq.noise.tin}-\eqref{eq.inf.expr}, an explicit expression for the state evolution can be obtained as stated in the following proposition.}

%%%%%%%%%%%%%%%%%%%%%%%
%%%%%%%%%%%%%%%%%%%%%%% PROPOSITION 2
\begin{proposition}\label{prop2}
Consider the AMP-based user activity detection for massive MIMO system in \eqref{eq.model.tin}. Based on the statistics in \eqref{eq.distri.bg} and \eqref{eq.dist.g}, and the
state evolution \eqref{eq.se}, the matrix $\mathbf{\Sigma}_{t}$ stays as a diagonal matrix with identical diagonal entries at each iteration,
i.e., $\mathbf{\Sigma}_{t}=\tau_{t}^2\mathbf{I}$, where
\begin{align}\label{eq.se.tinexpr}
\tau_{t+1}^2=\sigma_{w}^2+\frac{N(B-1)}{\lambda^{-1} L}\mathbb{E}\left[G_{/b}^2\right]+\frac{\hat{a} N}{\lambda^{-1} L}\int_{\epsilon_{2}}^{\infty}\psi(g)dg,
\end{align}
where $\hat{a}=aR_{cell}^{-2}$, an approximation of $\mathbb{E}[G_{/b}^2]$ is given by \eqref{eq.inf.expr}, and $\psi(g)$ is defined as
\begin{align}\label{A.func.psi}
\psi(g)=\frac{g^{2-\gamma}\tau_t^2}{g^2+\tau_t^2}
+\frac{g^{4-\gamma}}{g^2+\tau_t^2}\left(1-\frac{\varphi_{M}(g^2\tau_t^{-2})}{\Gamma(M+1)}\right),
\end{align}
in which
\edf{$\varphi_{M}(s)\triangleq\int_{0}^{\infty}\frac{t^M\exp(-t)}{1+(1-\lambda)(1+s)^M\exp(-st)/\lambda}dt.$}
\end{proposition}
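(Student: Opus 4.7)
The plan is to reduce the matrix state evolution \eqref{eq.se} to the scalar recursion in the statement by exploiting the rotational symmetry of the Bayesian denoiser and then computing its conditional MMSE in closed form, averaged over the large-scale fading density \eqref{eq.dist.g}. First I would verify by induction that $\mathbf{\Sigma}_t=\tau_t^2\mathbf{I}$. Conditioning on $G_b=g$, the row $\mathbf{R}+\mathbf{U}^t$ is either $\mathcal{CN}(0,\tau_t^2\mathbf{I})$ (inactive) or $\mathcal{CN}(0,(g^2+\tau_t^2)\mathbf{I})$ (active), hence rotationally invariant under any unitary right-multiplication in either case. Because the denoiser \eqref{eq.denoiser} can be written as $\eta_t(\tilde{\mathbf{x}},g)=\pi_g(\|\tilde{\mathbf{x}}\|_2^2)\cdot\tfrac{g^2}{g^2+\tau_t^2}\tilde{\mathbf{x}}$, a scalar (depending only on $\|\tilde{\mathbf{x}}\|_2^2$) multiple of $\tilde{\mathbf{x}}$, it is covariant with such rotations. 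Therefore the joint law of $(\mathbf{R},\eta_t)$ is invariant under $(\mathbf{R},\eta_t)\mapsto(\mathbf{R}U,\eta_t U)$, and $\mathbb{E}[\mathbf{D}^t(\mathbf{D}^t)^*]$ must be a scalar multiple of $\mathbf{I}$, preserving the diagonal structure at the next iteration.

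The first two terms of \eqref{eq.se.tinexpr} then follow immediately from substituting $\tilde{\sigma}_w^2$ of \eqref{eq.noise.tin} together with the closed-form $\mathbb{E}[G_{/b}^2]$ in \eqref{eq.inf.expr}. For the third term, I would compute the conditional per-antenna MMSE $\mathrm{mmse}(g)\triangleq\frac{1}{M}\mathbb{E}[\|\mathbf{R}-\eta_t\|_2^2\mid G_b=g]$ via the orthogonality identity $\mathrm{mmse}(g)=\lambda g^2-\frac{1}{M}\mathbb{E}[\|\eta_t\|_2^2\mid g]$, where $\lambda g^2=\mathbb{E}[\|\mathbf{R}\|_2^2\mid g]/M$. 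Applying the tower rule gives $\mathbb{E}[\|\eta_t\|_2^2\mid g]=\mathbb{E}[\langle\mathbf{R},\eta_t\rangle\mid g]$; splitting by the activity indicator, only the active mode contributes, and conditioning further on $\tilde{\mathbf{x}}$ in that mode replaces $\langle\mathbf{R},\tilde{\mathbf{x}}\rangle$ by $\frac{g^2}{g^2+\tau_t^2}\|\tilde{\mathbf{x}}\|_2^2$. Since $\|\tilde{\mathbf{x}}\|_2^2/(g^2+\tau_t^2)$ is Gamma$(M,1)$-distributed in the active mode, the substitution $t=\|\tilde{\mathbf{x}}\|_2^2 g^2/(\tau_t^2(g^2+\tau_t^2))\cdot(g^2+\tau_t^2)/g^2=\|\tilde{\mathbf{x}}\|_2^2\tau_t^{-2}$ collapses the radial integral into exactly $\varphi_M(g^2\tau_t^{-2})/\Gamma(M+1)$ as defined in \eqref{A.func.psi}. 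Algebraic rearrangement yields $\mathrm{mmse}(g)=\lambda\bigl[\tfrac{g^2\tau_t^2}{g^2+\tau_t^2}+\tfrac{g^4}{g^2+\tau_t^2}\bigl(1-\tfrac{\varphi_M(g^2\tau_t^{-2})}{\Gamma(M+1)}\bigr)\bigr]$. Averaging against the density $p(g)=\hat{a}g^{-\gamma}$ of $G_b$ (from Proposition~\ref{prop1} with $R_{min}=0$, $R_{max}=R_{cell}$) and extending the upper limit to $\infty$ under the circular-coverage approximation used in \eqref{eq.inf.expr}, this produces exactly $\lambda\hat{a}\int_{\epsilon_2}^{\infty}\psi(g)\,dg$, matching the third term after multiplication by $N/L$.

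The main obstacle is the closed-form evaluation of $\mathbb{E}[\|\eta_t\|_2^2\mid g]$. The Bernoulli posterior weight $\pi_g(\|\tilde{\mathbf{x}}\|_2^2)$ has a denominator of the form $1+\tfrac{1-\lambda}{\lambda}(1+\theta_{bbn})^M e^{-\Delta_{bbn}\|\tilde{\mathbf{x}}\|_2^2}$ that must be carried through the complex Gamma integration, and recognizing the exponent reduction $\Delta_{bbn}(g^2+\tau_t^2)=g^2\tau_t^{-2}=\theta_{bbn}$ is what makes the integral match the $\varphi_M$ template in \eqref{A.func.psi} verbatim; this is where the Bernoulli-Gaussian prior structure is essentially used. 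All remaining steps—propagating rotational invariance, substituting $\tilde{\sigma}_w^2$, and integrating against \eqref{eq.dist.g}—are direct.
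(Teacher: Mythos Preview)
Your argument is correct and is a cleaner variant of the paper's own proof. The paper proceeds by brute force: it conditions on the matched-filter output $\tilde{\mathbf{R}}^t=\tilde{\mathbf{r}}_t$ and $G_b=g$, writes out the full matrix $\mathbf{C}^t(\tilde{\mathbf{r}}_t,g)=\mathbb{E}[\mathbf{D}^t(\mathbf{D}^t)^*\mid\tilde{\mathbf{r}}_t,g]$ explicitly as a scalar-$\mathbf{I}$ term plus a rank-one term in $\tilde{\mathbf{r}}_t^T(\tilde{\mathbf{r}}_t^T)^*$, and then integrates over $\tilde{\mathbf{r}}_t$ to show the off-diagonals vanish and each diagonal entry equals the same $C=\lambda g^2-\tfrac{\lambda g^4}{g^2+\tau_t^2}\tfrac{\varphi_M(g^2\tau_t^{-2})}{\Gamma(M+1)}$ you obtain. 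Your unitary-invariance argument dispenses with the rank-one bookkeeping at the outset, and the MMSE orthogonality identity $\mathbb{E}\|\mathbf{R}-\eta_t\|_2^2=\mathbb{E}\|\mathbf{R}\|_2^2-\mathbb{E}\|\eta_t\|_2^2$ reduces the diagonal computation to a single radial integral; that is what buys the brevity.

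Two small fixes. First, the change of variable that lands on $\varphi_M$ is $t=\|\tilde{\mathbf{x}}\|_2^2/(g^2+\tau_t^2)$, not $t=\|\tilde{\mathbf{x}}\|_2^2\tau_t^{-2}$ as you wrote: with this choice $t$ is Gamma$(M,1)$ in the active mode, $\Delta_{bbn}\|\tilde{\mathbf{x}}\|_2^2=\theta_{bbn}t$, and the integrand becomes $t^Me^{-t}/\bigl(1+\tfrac{1-\lambda}{\lambda}(1+\theta_{bbn})^Me^{-\theta_{bbn}t}\bigr)$ exactly. Second, the upper limit $\infty$ in $\int_{\epsilon_2}^\infty\psi(g)\,dg$ is not an ``extension under the circular-coverage approximation''; setting $R_{min}=0$, $R_{max}=R_{cell}$ in Proposition~\ref{prop1} gives $\epsilon_{max}=\infty$ directly, so the support of $G_b$ is already $[\epsilon_2,\infty)$ and its density is $\hat{a}g^{-\gamma}$ there.
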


\begin{proof}
Please see Appendix~\ref{A:propB}.
\end{proof}

\edzr{Note that Proposition~\ref{prop2} reveals the impact of the system parameters in the multi-cell scenario on the performance of AMP} \edzs{for the massive MIMO system}. \edzr{By numerically solving the fixed-point equation \eqref{eq.se.tinexpr}, the performance can be predicted under a various of systems setups. It also can be obtained from \eqref{eq.se.tinexpr} that} as the number of cells $B$ increases, the second term in \eqref{eq.se.tinexpr} tends to a constant as
%26
\begin{align}\label{eq.se.tinexpr.B}
\frac{N(B-1)}{\lambda^{-1} L}\mathbb{E}\left[G_{/b}^2\right]\rightarrow \left( \frac{B}{R_{net}^2}\right)\left(\frac{N}{L}\right)\frac{R_{cell}^{2-\beta/10}10^{-\alpha/10}}{\lambda^{-1} (\beta/20-1)},
\end{align}
\edzr{whose value depends on $N/L$, the density of the cells $B/R_{net}^2$, cell radius $R_{cell}$, path-loss parameters $\alpha, \beta$, and the activity probability $\lambda$.}
We can use \eqref{eq.se.tinexpr.B} together with \eqref{eq.se.tinexpr} to characterize the
\edww{limiting}
performance as the number of cells tends to infinity.

\subsubsection{Cooperative MIMO}
\edzr{We derive an explicit expression for the state evolution in \eqref{eq.se.2} for cooperative MIMO via a similar approach used for \eqref{eq.se} in Proposition~\ref{prop2}. The main
difference is that the expectation in \eqref{eq.se.2} is taken with respect to $G$, whereas the expectation in \eqref{eq.se} is taken with respect to $G_{b}$.}

\begin{proposition}\label{prop4}
Consider the AMP-based user activity detection for the cooperative MIMO system in \eqref{eq.sys2}. Based on the statistics in \eqref{eq.distri.bg} and \eqref{eq.dist.g}, and the state evolution \eqref{eq.se.2}, the matrix $\mathbf{\Sigma}_{t}$ stays as a diagonal matrix with identical diagonal entries at each iteration,
i.e., $\mathbf{\Sigma}_{t}=\tau_{t}^2\mathbf{I}$, where
\begin{align}\label{eq.se.recexpr}
\tau_{t+1}^2=\sigma_{w}^2+\frac{\check{a}NB}{\lambda^{-1} L}\int_{\epsilon_{1}}^{\infty}\psi(g)dg,
\end{align}
where $\check{a}=aR^{-2}_{net}$, $\epsilon_{1}=10^{-(\alpha+\beta\log_{10}(R_{net}))/20}$, and $\psi(g)$ is defined in \eqref{A.func.psi}.
\end{proposition}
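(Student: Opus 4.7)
The plan is to follow the template used in the proof of Proposition~\ref{prop2}, adapting it to the cooperative MIMO state evolution in \eqref{eq.se.2}. Two bookkeeping changes need to be tracked: (i) the random variable $G$ in \eqref{eq.se.2} models the large-scale fading coefficient of a user drawn uniformly from the \emph{entire} network (maximum radius $R_{net}$) rather than only from cell $b$; and (ii) the bare noise variance $\sigma_w^2$ appears in place of $\tilde\sigma_w^2$, because in \eqref{eq.sys2} inter-cell interference is jointly recovered rather than treated as noise.

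First I would argue that $\mathbf{\Sigma}_t$ remains a scalar multiple of the identity for every $t$. This follows exactly as in Proposition~\ref{prop2}: the row-input noise $\mathbf{U}^t\sim\mathcal{CN}(\mathbf{0},\tau_t^2\mathbf{I})$ is rotationally invariant across antennas, and the Bayesian denoiser \eqref{eq.denoiser} (with $g_{bbn}$ replaced by $g_{bjn}$) acts on a row vector only through its squared norm times the row itself. Consequently, the residual $\eta_t(\mathbf{R}+\mathbf{U}^t,G)-\mathbf{R}$ is isotropic, $\mathbb{E}[\mathbf{D}^t(\mathbf{D}^t)^{*}]$ is proportional to $\mathbf{I}$, and the state evolution \eqref{eq.se.2} collapses into a scalar recursion in $\tau_t^2$.

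Next I would evaluate $m(\tau_t)\triangleq M^{-1}\mathrm{tr}\,\mathbb{E}[\mathbf{D}^t(\mathbf{D}^t)^{*}]$ by conditioning on the activity indicator in \eqref{eq.distri.bg.2}. When the user is inactive (probability $1-\lambda$), $\mathbf{R}+\mathbf{U}^t=\mathbf{U}^t\sim\mathcal{CN}(\mathbf{0},\tau_t^2\mathbf{I})$; when active (probability $\lambda$), $\mathbf{R}+\mathbf{U}^t\sim\mathcal{CN}(\mathbf{0},(G^2+\tau_t^2)\mathbf{I})$. In both cases the inner expectation reduces to a one-dimensional integral against a chi-squared distribution on $\|\mathbf{R}+\mathbf{U}^t\|_2^2$. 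These integrals are identical to those computed in Appendix~\ref{A:propB}, and yield a conditional contribution equal to the function $\psi(g)$ of \eqref{A.func.psi} evaluated at $g=G$, with the Bernoulli factor $\lambda$ and the $\varphi_M(\cdot)$ integral appearing in exactly the same form. Thus $m(\tau_t)=\lambda\,\mathbb{E}_G[\psi(G)]$.

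The remaining step is to take the expectation over $G$ and fold in the prefactor $NB/L$ from \eqref{eq.se.2}. Applying Proposition~\ref{prop1} with $R_{min}=0$ and $R_{max}=R_{net}$ gives the density $p_G(g)=\check{a}\,g^{-\gamma}$ with $\check{a}=aR_{net}^{-2}$ on the support $[\epsilon_1,\epsilon_\infty]$, where $\epsilon_1=10^{-(\alpha+\beta\log_{10}R_{net})/20}$. Consistent with Proposition~\ref{prop2}, extending the upper limit to infinity introduces only negligible error since $\psi(g)$ decays rapidly. Multiplying by $NB/L$ and adding the bare noise $\sigma_w^2$ from \eqref{eq.se.2} yields \eqref{eq.se.recexpr}. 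The proof is largely a transcription of Appendix~\ref{A:propB}; the only real subtlety is keeping straight the distinction between $G$ versus $G_b$, and between the prefactors $NB$ (all cells jointly detected) versus $N(B-1)+$``$N$'' (out-of-cell users as interference plus in-cell users recovered) appearing in the massive MIMO case, so that the lower limit $\epsilon_1$ correctly reflects the enlarged support of $G$.
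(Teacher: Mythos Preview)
Your proposal is correct and matches the paper's own proof essentially line for line: both argue that $\mathbf{\Sigma}_t=\tau_t^2\mathbf{I}$ by invoking the same induction as in Proposition~\ref{prop2}, then reuse the conditional expectation \eqref{A.eq.C} and average it against the density of $G$ obtained from Proposition~\ref{prop1} with $R_{min}=0$, $R_{max}=R_{net}$. One minor remark: when $R_{min}=0$ the upper support endpoint $\epsilon_{max}$ is already $+\infty$, so the upper limit in \eqref{eq.se.recexpr} is exact rather than an approximation as you suggest.
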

\begin{proof}
Please see Appendix~\ref{A:CoopSE}.
\end{proof}

Through a careful comparison of \eqref{eq.se.tinexpr} and \eqref{eq.se.recexpr}, we have the following result.

%%%%% PROPOSITION 5
\begin{proposition}\label{prop5}
Let $\tau_{\infty}^{\mathrm{TIN}}$ and $\tau_{\infty}^{\mathrm{REC}}$ denote the converged $\tau_t$ in \eqref{eq.se.tinexpr}, which corresponds to the case of treating the interference as noise, and the converged $\tau_t$ in \eqref{eq.se.recexpr}, which corresponds to the case of recovering the interference, respectively.
We have $\tau_{\infty}^{\mathrm{TIN}}> \tau_{\infty}^{\mathrm{REC}}$.
\end{proposition}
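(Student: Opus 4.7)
The plan is to reduce the asymptotic inequality $\tau_\infty^{\mathrm{TIN}}>\tau_\infty^{\mathrm{REC}}$ to a pointwise comparison of the two state-evolution maps and then invoke monotonicity. Writing $f_{\mathrm{TIN}}(\tau^{2})$ and $f_{\mathrm{REC}}(\tau^{2})$ for the right-hand sides of \eqref{eq.se.tinexpr} and \eqref{eq.se.recexpr}, the intermediate goal is to prove $f_{\mathrm{TIN}}(\tau^{2})>f_{\mathrm{REC}}(\tau^{2})$ for every $\tau^{2}>0$, which will then force the fixed points to satisfy the stated inequality.

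First I would align the scaling constants in front of the two integrals using the same circular-coverage approximation already invoked in Propositions~\ref{prop1} and \ref{prop2}. Setting $R_{net}^{2}\approx B R_{cell}^{2}$ gives $B\check{a}\approx\hat{a}$ and $(B-1)a/(R_{net}^{2}-R_{cell}^{2})\approx\hat{a}$, so the out-of-cell interference term in \eqref{eq.se.tinexpr} collapses to $\tfrac{\lambda N\hat{a}}{L}\int_{\epsilon_{1}}^{\epsilon_{2}}g^{2-\gamma}\,dg$ and the REC integral in \eqref{eq.se.recexpr} carries the common prefactor $\tfrac{\lambda N\hat{a}}{L}$. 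Splitting the REC integral at $g=\epsilon_{2}$, the piece on $[\epsilon_{2},\infty)$ appears identically on both sides and cancels, leaving
\begin{equation*}
f_{\mathrm{TIN}}(\tau^{2})-f_{\mathrm{REC}}(\tau^{2})=\frac{\lambda N\hat{a}}{L}\int_{\epsilon_{1}}^{\epsilon_{2}}\bigl[g^{2-\gamma}-\psi_{\tau}(g)\bigr]\,dg.
\end{equation*}

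The integrand is strictly positive by a one-line algebraic manipulation of \eqref{A.func.psi}: combining the two fractions via $\tfrac{g^{2-\gamma}\tau^{2}+g^{4-\gamma}}{g^{2}+\tau^{2}}=g^{2-\gamma}$ rewrites the denoiser-MSE profile as
\begin{equation*}
\psi_{\tau}(g)=g^{2-\gamma}-\frac{g^{4-\gamma}}{g^{2}+\tau^{2}}\cdot\frac{\varphi_{M}(g^{2}/\tau^{2})}{\Gamma(M+1)},
\end{equation*}
and $\varphi_{M}(s)>0$ for all $s\geq 0$ because the integrand in its defining integral is strictly positive. Hence $\psi_{\tau}(g)<g^{2-\gamma}$ on $(0,\infty)$ and $f_{\mathrm{TIN}}(\tau^{2})>f_{\mathrm{REC}}(\tau^{2})$ for every $\tau^{2}>0$. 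Informally, this is the familiar statement that a Bayes-optimal denoiser strictly reduces the per-user MSE compared with treating the signal as Gaussian noise.

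The last step is to lift the strict pointwise domination to a strict inequality between fixed points. Here I would verify that $f_{\mathrm{REC}}$ is continuous and monotone increasing in $\tau^{2}$---the subtracted term in $\psi_{\tau}$ is the product of $1/(g^{2}+\tau^{2})$ and $\varphi_{M}(g^{2}/\tau^{2})$, both of which are monotone decreasing in $\tau^{2}$---so the state-evolution orbit converges monotonically to its fixed point $\tau_{\infty}^{\mathrm{REC}}$. Evaluating the TIN map at this point gives $f_{\mathrm{TIN}}((\tau_{\infty}^{\mathrm{REC}})^{2})>f_{\mathrm{REC}}((\tau_{\infty}^{\mathrm{REC}})^{2})=(\tau_{\infty}^{\mathrm{REC}})^{2}$, so the TIN iterate from $(\tau_{\infty}^{\mathrm{REC}})^{2}$ strictly increases and its limit satisfies $\tau_{\infty}^{\mathrm{TIN}}>\tau_{\infty}^{\mathrm{REC}}$. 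I expect this final monotonicity/branch-selection argument to be the main technical point rather than the pointwise bound itself: care is needed if multiple fixed points exist, in which case one must select the branches actually realized by the converged AMP iterates in both settings and use monotone comparison principles to certify the strict separation survives in the limit.
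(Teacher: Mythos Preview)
Your approach is essentially the same as the paper's: split the REC integral at $\epsilon_{2}$, use $B\check a=\hat a$ (equivalently $R_{net}^{2}=BR_{cell}^{2}$ under the circular approximation) to match the prefactors, and then bound $\psi(g)<g^{2-\gamma}$ on $[\epsilon_{1},\epsilon_{2}]$---exactly the inequality the paper obtains by dropping the $\varphi_{M}$ term. The only substantive difference is that you spell out the monotone fixed-point comparison at the end, whereas the paper simply asserts that pointwise domination of the state-evolution maps yields $\tau_{\infty}^{\mathrm{TIN}}>\tau_{\infty}^{\mathrm{REC}}$; your added care here is warranted but does not constitute a different route.
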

\begin{proof}
Please see Appendix~\ref{A:propD}.
\end{proof}

This result shows that recovering the inter-cell interference achieves
\edww{a}
smaller value
\edww{for}
the fixed point of the state evolution, as compared to treating
\edf{it}
as noise.
\edww{Hence, recovering the interference} can increase the reliability of the preliminary activity detection at each BS.

However, in practice BSs may not want to detect all users in the network due to the complexity involved and the amount of information needed to be acquired on the signature sequences and channel statistics of all the users. Considering these practical issues, a good strategy is that each BS detects the users from its own cell and a few neighboring cells while treating the rest of the inter-cell interference as noise. Simulation results in Section~\ref{sec.simu} show that this strategy brings non-negligible improvement over treating all inter-cell interference as noise.
%%%%%%%%%%%%%%%%%%%%%%%%%%%%%%%%%%
% 	IV.B Detection Performance Analysis in Massive MIMO
%%%%%%%%%%%%%%%%%%%%%%%%%%%%%%%%%%
\subsection{\edzr{Detection Performance Analysis}}
We are interested in two types of error probability in user activity detection, the probability of false alarm, which is defined as the probability that a user is declared to be inactive whereas it is active in reality, and the probability of missed detection, which is defined as the probability that a user is declared to be inactive whereas it is active in reality. \edzr{In this subsection, we present the error probability analysis for both massive MIMO and cooperative MIMO. For cooperative MIMO, we assume perfect LLR forwarding at each BS for analytic tractability. The performance with LLR quantization is
evaluated via simulations in Section~\ref{sec.simu.quant}.}

\subsubsection{Massive MIMO}
\edz{In the following proposition,} we compute these two error probabilities on a per-user basis since
different users may achieve different trade-offs on these two types of error depending on their large-scale fading coefficients.
\edz{
\begin{proposition}\label{prop2.1}
Consider user activity detection by AMP for the massive MIMO system with fixed number of antennas $M$ in the asymptotic regime where the number of users $N$ and the length of the signature sequence $L$ tend to infinity with fixed ratio $N/L$,
\edwn{so}
that the performance can be characterized by the state evolution \eqref{eq.se.tinexpr}. Based on the LLR test with a threshold $l_{bn}^{\mathrm{LS}}$ on the AMP matched filtered output $\|\mathbf{\tilde{x}}^{t}_{bbn}\|_2^2$, the probabilities of missed detection and false alarm of user $n$ from cell $b$ are given as, respectively
\begin{align}
P_M^{\mathrm{LS},bn} &= \Gamma^{-1}(M)\cdot\bar{\gamma}\left(M,l_{bn}^{\mathrm{LS}}(g_{bbn}^2+\tau_{t}^{2})^{-1}\right),\label{eq.pm.massive}\\
P_F^{\mathrm{LS},bn}&=1-\Gamma^{-1}(M)\cdot\bar{\gamma}\left(M,l_{bn}^{\mathrm{LS}}\tau_{t}^{-2}\right),\label{eq.pf.massive}
\end{align}
where superscript ``LS" represents large-scale antenna or massive MIMO, $\Gamma(\cdot)$ is the Gamma function, and $\bar{\gamma}(\cdot,\cdot)$ is the lower incomplete Gamma function.
\end{proposition}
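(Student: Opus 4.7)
The plan is to use the AMP state evolution to reduce the per-user decision rule $\|\mathbf{\tilde{x}}_{bbn}^{t}\|_2^2 \lessgtr l_{bn}^{\mathrm{LS}}$ to a binary hypothesis test between two isotropic complex Gaussian distributions, and then to read off the two tail probabilities from the CDF of a sum of i.i.d.\ exponentials (equivalently, a Gamma distribution).

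First I would invoke the AMP decoupling principle reviewed in Section~\ref{sec.amp}: in the asymptotic regime $L,N\to\infty$ with $N/L$ fixed, the matched filtered output satisfies $\mathbf{\tilde{x}}_{bbn}^{t} = \mathbf{x}_{bbn} + \mathbf{v}_n^t$, where $\mathbf{v}_n^t \sim \mathcal{CN}(0,\tau_t^2 \mathbf{I})$ is independent of $\mathbf{x}_{bbn}$ and $\tau_t$ obeys the fixed-point equation \eqref{eq.se.tinexpr} established in Proposition~\ref{prop2}. Conditioning on the activity indicator and using the Bernoulli--Gaussian prior \eqref{eq.distri.bg}, one obtains
\begin{align*}
\mathbf{\tilde{x}}_{bbn}^{t} \mid (a_{bn}{=}0) &\sim \mathcal{CN}\!\left(\mathbf{0},\,\tau_t^2\mathbf{I}\right),\\
\mathbf{\tilde{x}}_{bbn}^{t} \mid (a_{bn}{=}1) &\sim \mathcal{CN}\!\left(\mathbf{0},\,(g_{bbn}^2+\tau_t^2)\mathbf{I}\right),
\end{align*}
where in the second case I use the independence of the Rayleigh fading component $\mathbf{\bar h}_{bbn}$ from $\mathbf{v}_n^t$ and the fact that the sum of two independent isotropic complex Gaussians is again isotropic complex Gaussian with variance equal to the sum of the individual variances.

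Next I would compute the distribution of the test statistic $T_{bn}\triangleq\|\mathbf{\tilde{x}}_{bbn}^{t}\|_2^2$ under each hypothesis. For any $\mathbf{z}\sim\mathcal{CN}(\mathbf{0},\sigma^2 \mathbf{I})$ of length $M$ with i.i.d.\ entries, $\|\mathbf{z}\|_2^2=\sum_{m=1}^{M}|z_m|^2$ is a sum of $M$ i.i.d.\ mean-$\sigma^2$ exponentials, hence Gamma-distributed with shape $M$ and scale $\sigma^2$, whose CDF is $\Gamma^{-1}(M)\,\bar{\gamma}(M, x/\sigma^2)$. Plugging $\sigma^2=\tau_t^2$ into $\Pr(T_{bn}>l_{bn}^{\mathrm{LS}}\mid a_{bn}=0)$ yields \eqref{eq.pf.massive}, and plugging $\sigma^2=g_{bbn}^2+\tau_t^2$ into $\Pr(T_{bn}\leq l_{bn}^{\mathrm{LS}}\mid a_{bn}=1)$ yields \eqref{eq.pm.massive}. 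The equivalence between thresholding on $T_{bn}$ and thresholding on $\operatorname{LLR}_{bbn}$ has already been noted after \eqref{eq.llr} using the monotonicity of the LLR expression, so the choice of test statistic is consistent with the proposition's setup.

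The only non-routine ingredient is the Gaussianity of $\mathbf{v}_n^t$ together with its independence of $\mathbf{x}_{bbn}$; this is precisely the content of the AMP state-evolution result imported from \cite{Chen2018,Liu2018}, so the remaining derivation is a direct manipulation of the Gamma CDF. I do not anticipate needing any additional concentration or perturbation argument, because the asymptotic regime stated in the proposition ($L,N\to\infty$ with fixed ratio and fixed $M$) matches the regime in which the decoupling principle and Proposition~\ref{prop2} are rigorously valid.
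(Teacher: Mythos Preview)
Your proposal is correct and mirrors the paper's own proof: the paper likewise writes $P_M^{\mathrm{LS},bn}$ as the integral of the conditional Gaussian likelihood \eqref{eq.likeli2} over the region $\{\|\mathbf{\tilde{x}}_{bbn}^t\|_2^2<l_{bn}^{\mathrm{LS}}\}$ and identifies this integral as the CDF of a $\chi^2$ random variable with $2M$ degrees of freedom (equivalently, your Gamma$(M,\sigma^2)$), yielding \eqref{eq.pm.massive}, with the false-alarm case handled analogously.
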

\begin{proof}
Please see Appendix~\ref{A:propB.1}.
\end{proof}
}
Note that both \eqref{eq.pm.massive} and \eqref{eq.pf.massive} are in similar forms as the error probabilities in single-cell case \cite{Chen2018}, \edf{but
with substantially larger value for
$\tau_{t}$
due to the inter-cell interference.}

\subsubsection{Cooperative MIMO}
The following proposition analyzes the activity detection performance in the cooperative MIMO system
\edww{assuming}
perfect LLR forwarding, which provides a performance upper bound for practical cooperative MIMO schemes.

\edz{
\begin{proposition}\label{prop5.1}
Consider user activity detection by AMP for the cooperative MIMO system with fixed number of antennas $M$ in the asymptotic regime where the number of users $N$ and the length of the signature sequence $L$ tend to infinity with fixed ratio $N/L$,
\edwn{so}
that the performance can be characterized by the state evolution \eqref{eq.se.recexpr}. Based on the LLR test with a threshold $l_{bn}^{\mathrm{CO}}$ on the aggregated result $\sum_{j\in\mathcal{B}_{bn}}\Delta_{jbn}\|\mathbf{\tilde{x}}_{jbn}\|_2^2$ at the CU, the probabilities of missed detection and false alarm of user $n$ from cell $b$ are given as, respectively
\begin{align}
P_M^{\mathrm{CO},bn}& =\int_{\mathcal{D}^{\mathrm{CO}}}\frac{\exp\left(-\sum_{j\in\mathcal{B}_{bn}}\|\mathbf{\tilde{x}}_{jbn}\|_2^2/(g_{jbn}^2+\tau_{t}^2)\right)}{\prod_{j\in\mathcal{B}_{bn}}\pi^{M}(\tau_{t}^2+g_{jbn}^2)^M}d\mathbf{\tilde{x}}_{bn},
\label{eq.pm_coop}\\
P_F^{\mathrm{CO},bn}&
=\int_{/\mathcal{D}^{\mathrm{CO}}}\frac{\exp\left(-\sum_{j\in\mathcal{B}_{bn}}\|\mathbf{\tilde{x}}_{jbn}\|_2^2/\tau_{t}^{2}\right)}{(\pi^{M}\tau_{t}^{2M})^{B_{bn}}}d\mathbf{\tilde{x}}_{bn},\label{eq.pf_coop}
\end{align}
where superscript ``CO" represents cooperative MIMO, $\mathcal{D}^{\mathrm{CO}}\triangleq \left\{\sum_{j\in\mathcal{B}_{bn}}\Delta_{jbn}\|\mathbf{\tilde{x}}_{jbn}\|_2^2 < l_{bn}^{\mathrm{CO}}\right\}$ is the decision region, and $/\mathcal{D}^{\mathrm{CO}}$ is the complementary region of $\mathcal{D}^{\mathrm{CO}}$.
\end{proposition}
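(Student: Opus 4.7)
The plan is to build directly on the asymptotic Gaussianity of the AMP matched filter output that was already established prior to Proposition~\ref{prop4}, together with the joint Bayesian decision rule induced by the aggregated LLR. First I would invoke the state evolution \eqref{eq.se.recexpr}: at each BS $j\in\mathcal{B}_{bn}$, the matched filter output satisfies $\mathbf{\tilde{x}}_{jbn}=\mathbf{x}_{jbn}+\mathbf{v}_{jn}^{t}$ with $\mathbf{v}_{jn}^{t}\sim\mathcal{CN}(\mathbf{0},\tau_{t}^{2}\mathbf{I})$, where $\tau_{t}$ converges to the fixed point of \eqref{eq.se.recexpr}. Because each BS in the cooperative setting runs AMP on the full concatenated model \eqref{eq.sys2} and the sequences are i.i.d.\ Gaussian, the scalar parameter $\tau_{t}$ is common across all $j$, which is the key piece of information needed to make the error probabilities user-specific but not BS-specific.

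Next I would condition on the activity bit $a_{bn}$ and derive the hypothesis-conditional distribution of the length-$M$ row $\mathbf{\tilde{x}}_{jbn}$. Substituting $\mathbf{x}_{jbn}=a_{bn}g_{jbn}\bar{\mathbf{h}}_{jbn}$ with $\bar{\mathbf{h}}_{jbn}\sim\mathcal{CN}(\mathbf{0},\mathbf{I})$ into the AMP model gives
\begin{align*}
\mathbf{\tilde{x}}_{jbn}\,|\,a_{bn}=1 &\sim \mathcal{CN}\bigl(\mathbf{0},(g_{jbn}^{2}+\tau_{t}^{2})\mathbf{I}\bigr),\\
\mathbf{\tilde{x}}_{jbn}\,|\,a_{bn}=0 &\sim \mathcal{CN}\bigl(\mathbf{0},\tau_{t}^{2}\mathbf{I}\bigr).
\end{align*}
Then, adopting the same cross-BS independence approximation already used to justify the summation in \eqref{eq.llr.agg}, the joint conditional density of $\mathbf{\tilde{x}}_{bn}\triangleq\{\mathbf{\tilde{x}}_{jbn}\}_{j\in\mathcal{B}_{bn}}$ factorizes across $j$, yielding the product forms that appear inside the integrands of \eqref{eq.pm_coop} and \eqref{eq.pf_coop}.

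Finally, I would translate the threshold test on the aggregated LLR into an integration region. Since $\operatorname{LLR}_{bn}^{\mathrm{AG}}$ in \eqref{eq.llr.agg} is monotonically increasing in the statistic $T\triangleq\sum_{j\in\mathcal{B}_{bn}}\Delta_{jbn}\|\mathbf{\tilde{x}}_{jbn}\|_{2}^{2}$, the decision ``inactive'' corresponds exactly to the region $\mathcal{D}^{\mathrm{CO}}=\{T<l_{bn}^{\mathrm{CO}}\}$, and ``active'' to its complement $/\mathcal{D}^{\mathrm{CO}}$. Integrating the active-hypothesis joint density over $\mathcal{D}^{\mathrm{CO}}$ produces \eqref{eq.pm_coop}, while integrating the inactive-hypothesis joint density over $/\mathcal{D}^{\mathrm{CO}}$ produces \eqref{eq.pf_coop}.

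The main obstacle is not the algebra, which is a direct assembly once the conditional distributions are in hand, but the rigor of two approximations: (i) the asymptotic Gaussian decoupling $\mathbf{\tilde{x}}_{jbn}=\mathbf{x}_{jbn}+\mathbf{v}_{jn}^{t}$ relies on the AMP state evolution being valid in the regime $L,N\to\infty$ with fixed $N/L$ and finite $M$; and (ii) the across-BS independence of $\mathbf{v}_{jn}^{t}$ is an approximation, since the residuals at different BSs are correlated through the common active-user vector $\mathbf{X}$ and the shared sequences. I would handle (i) by explicitly invoking the state evolution already stated in the paper as the operative assumption, and handle (ii) by adopting the same independence convention under which the aggregated LLR \eqref{eq.llr.agg} was derived, noting that accuracy of this simplification will ultimately be validated by the simulations in Section~\ref{sec.simu}.
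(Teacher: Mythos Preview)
Your proposal is correct and follows essentially the same approach as the paper's own proof in Appendix~\ref{A:propD.1}: derive the hypothesis-conditional Gaussian densities of $\mathbf{\tilde{x}}_{jbn}$ from the AMP state evolution, take their product across $j\in\mathcal{B}_{bn}$ under the across-BS independence convention used for \eqref{eq.llr.agg}, and integrate over $\mathcal{D}^{\mathrm{CO}}$ (respectively $/\mathcal{D}^{\mathrm{CO}}$) to obtain \eqref{eq.pm_coop} and \eqref{eq.pf_coop}. Your write-up is in fact more explicit than the paper's about the two approximations (asymptotic decoupling and cross-BS independence), which is a welcome addition.
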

\begin{proof}
Please see Appendix~\ref{A:propD.1}.
\end{proof}
}

Note that although the integral in \eqref{eq.pm_coop} is in a complicated form, it can be calculated recursively as follows. Suppose that the indices of the BSs in $\mathcal{B}_{bn}$ are $1,\cdots,B_{bn}$. By transforming the Cartesian coordinates into the spherical coordinates, the integral can be re-written as
%38
%\if@twocolumn
\begin{align}\label{eq.pm_coop_exp}
&P_M^{\mathrm{CO},bn}\nonumber\\
&=\underbrace{
\idotsint\limits_{\sum \theta_jr_j<l}\exp\left(-\sum_{j=1}^{B_{bn}}r_j\right)
r_1^{M-1}dr_{1}\cdots r_{B_{bn}}^{M-1}dr_{B_{bn}}}
_{f_M(l;\theta_1,\cdots,\theta_{B_{bn}})}\nonumber\\
&=\int_{0}^{l/\theta_1}\exp(-r_1)r_1^{M-1}dr_1\cdot\nonumber\\
&\underbrace{
\idotsint\limits_{\sum_{j\neq 1} \theta_jr_j<l-\theta_1r_1}\exp\left(-\sum_{j=2}^{B_{bn}}r_j\right)
r_2^{M-1}dr_{2}\cdots r_{B_{bn}}^{M-1}dr_{B_{bn}}}
_{f_{M-1}(l-\theta_1r_1;\theta_2,\cdots,\theta_{B_{bn}})},
\end{align}
%\else
%\begin{align}\label{eq.pm_coop_exp}
%P_M^{\mathrm{CO},bn}
%&=\underbrace{
%\idotsint\limits_{\sum \theta_jr_j<l}\exp\left(-\sum_{j=1}^{B_{bn}}r_j\right)
%r_1^{M-1}dr_{1}\cdots r_{B_{bn}}^{M-1}dr_{B_{bn}}}
%_{f_M(l;\theta_1,\cdots,\theta_{B_{bn}})}\nonumber\\
%&=\int_{0}^{l/\theta_1}\exp(-r_1)r_1^{M-1}dr_1
%\underbrace{
%\idotsint\limits_{\sum_{j\neq 1} \theta_jr_j<l-\theta_1r_1}\exp\left(-\sum_{j=2}^{B_{bn}}r_j\right)
%r_2^{M-1}dr_{2}\cdots r_{B_{bn}}^{M-1}dr_{B_{bn}}}
%_{f_{M-1}(l-\theta_1r_1;\theta_2,\cdots,\theta_{B_{bn}})},
%\end{align}
%\fi
where we define $\theta_j\triangleq \theta_{jbn}$ (recall $\theta_{jbn}=g_{jbn}^2\tau_{t}^{-2}$) and $l\triangleq l_{bn}^{\mathrm{CO}}$ for \edz{notational} simplicity. We observe that \edz{$P_M^{\mathrm{CO},bn}$} can be obtained
by calculating a sequence of functions $f_{M}(\cdot), f_{M-1}(\cdot),\cdots ,f_{1}(\cdot)$ recursively, and each of them has a
\edww{closed-form}
expression due to the fact that each integrand is constituted
\edww{of}
polynomial functions and exponential functions. Similarly, the probability of false alarm in \eqref{eq.pf_coop} can be evaluated by using the same approach as in \eqref{eq.pm_coop_exp}.

The error probability expressions in \eqref{eq.pm.massive}-\eqref{eq.pf.massive} and \eqref{eq.pm_coop}-\eqref{eq.pf_coop} can be used to characterize the detection performance of massive MIMO and cooperative MIMO, respectively. However, since the expressions are complicated, it is difficult to perform an analytic comparison between the performance of these two architectures. Instead, we numerically evaluate the error probabilities and demonstrate the comparison under different parameter settings in Section~\ref{sec.simu}.

%%%%%%%%%%%%%%%%%%%%%%%%%%%%%%%%%%
% 	IV.C Asymptotic Performance Analysis
%%%%%%%%%%%%%%%%%%%%%%%%%%%%%%%%%%
\subsection{\edzr{Asymptotic Performance Analysis}}
\subsubsection{Massive MIMO}
We
\edww{now}
study the asymptotic performance of the user activity detection in the massive MIMO system as $M$ tends to infinity, which is an extension of the previous result in \cite{Liu2018}, where the asymptotic analysis for single-cell systems is carried out.

\begin{proposition}\label{prop3}
\edz{Consider user activity detection by AMP for the massive MIMO system with fixed number of antennas $M$ where the probabilities of missed detection and false alarm of user $n$ from cell $b$ are given in \eqref{eq.pm.massive} and \eqref{eq.pf.massive}, respectively.} If we further let $M$ tend to infinity, we can achieve perfect detection, i.e.,
\begin{align}
\lim_{M\rightarrow \infty}P_M^{\mathrm{LS},bn} = \lim_{M\rightarrow \infty}P_F^{\mathrm{LS},bn}=0, \forall b,n,
\end{align}
by properly
\edww{choosing}
the threshold $l_{bn}^{\mathrm{LS}}$ in the LLR test.
\end{proposition}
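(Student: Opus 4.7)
The starting point is the explicit Gamma-distribution form of the error probabilities from Proposition~\ref{prop2.1}: $\bar{\gamma}(M,x)/\Gamma(M)$ is the CDF of a $\mathrm{Gamma}(M,1)$ random variable $S_M$, equivalently the sum of $M$ i.i.d.\ $\mathrm{Exp}(1)$ variables with mean $M$ and variance $M$. Concentration of $S_M/M$ around $1$ (via the weak law of large numbers, or quantitatively via Chebyshev/Chernoff) will drive both detection errors to zero provided the threshold is placed so that the two relevant arguments of the Gamma CDF lie on opposite sides of $M$.

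The plan is to set $l_{bn}^{\mathrm{LS}} = M\, c_{bn}$ with $c_{bn}\in(\tau_\infty^2,\tau_\infty^2+g_{bbn}^2)$ --- for concreteness, $c_{bn} = \tau_\infty^2 + g_{bbn}^2/2$ --- where $\tau_\infty^2 = \tau_\infty^2(M)$ denotes the fixed point of the state evolution \eqref{eq.se.tinexpr} at the given $M$. Substituting into \eqref{eq.pm.massive}--\eqref{eq.pf.massive} yields
\begin{align*}
P_M^{\mathrm{LS},bn} &= \Pr\!\left(S_M/M \leq c_{bn}/(\tau_\infty^2+g_{bbn}^2)\right),\\
P_F^{\mathrm{LS},bn} &= \Pr\!\left(S_M/M > c_{bn}/\tau_\infty^2\right),
\end{align*}
in which, by construction, the first ratio is strictly less than $1$ and the second strictly greater than $1$. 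The weak law of large numbers then forces both probabilities to $0$ as $M\to\infty$.

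The main obstacle is that $\tau_\infty^2(M)$ itself changes with $M$, so I must verify that the two margins above do not close up in the limit. I expect to handle this via a monotonicity argument on the SE recursion in Proposition~\ref{prop2}: increasing $M$ sharpens the Bayesian posterior encoded by $\varphi_M(s)/\Gamma(M{+}1)$, which reduces the $M$-dependent piece of $\psi(g)$ and hence the per-iteration value of $\tau_{t+1}^2$. This should yield a finite upper bound on $\tau_\infty^2(M)$ uniform in $M$; combined with the lower bound $\tau_\infty^2(M)\geq\tilde\sigma_w^2>0$ built into \eqref{eq.se.tinexpr}, the ratios $c_{bn}/(\tau_\infty^2+g_{bbn}^2)$ and $c_{bn}/\tau_\infty^2$ stay bounded away from $1$ on opposite sides with margins independent of $M$. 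The concentration argument then applies uniformly in $b,n$ to give $\lim_{M\to\infty}P_M^{\mathrm{LS},bn}=\lim_{M\to\infty}P_F^{\mathrm{LS},bn}=0$, which is the claim.
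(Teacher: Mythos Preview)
Your proposal is correct and follows essentially the same route as the paper: choose the threshold $l_{bn}^{\mathrm{LS}}=\mu M$ with $\mu\in(\tau_\infty^2,\tau_\infty^2+g_{bbn}^2)$ and invoke concentration of the normalized Gamma sum around~$1$. The paper uses a CLT approximation in place of your weak law of large numbers, and it does not address the $M$-dependence of $\tau_\infty^2$ that you flag and handle via monotonicity of the state evolution; in that respect your argument is more careful than the paper's own sketch.
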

%\begin{proof}
A rigours proof can be obtained by following the idea in \cite{Liu2018}.
\edff{The following is}
an alternative way to show the asymptotic behavior by
using the central limit theorem (CLT).

\edz{Note that as $M$ increases, the matched filtered output $\|\mathbf{\tilde{x}}^{t}_{bbn}\|_2^2$ tends to a mixed Gaussian distribution by CLT depending on whether the user is active or not. The probability of missed detection then can be approximated by the cumulative distribution function (CDF) of a Gaussian distribution as}
%31
\begin{align}\label{eq.pm.massive.simp}
P_M^{\mathrm{LS},bn}\approx \Phi\left(\frac{l_{bn}^{\mathrm{LS}}/(g_{bbn}^2+\tau_{t}^2)-M}{\sqrt{M}}\right),
\end{align}
where $\Phi(u)\triangleq (2\pi)^{-1/2}\int_{-\infty}^{u}\exp(-t^2/2)dt$ is the CDF of the standard Gaussian distribution. Similarly, the probability of false alarm can be approximated as
%32
\begin{align}\label{eq.pf.massive.simp}
P_F^{\mathrm{LS},bn}\approx 1- \Phi\left(\frac{l_{bn}^{\mathrm{LS}}/\tau_{t}^2-M}{\sqrt{M}}\right).
\end{align}
\edf{Now,} if we set the threshold $l_{bn}^{\mathrm{LS}}$ to be $\mu M$, where $\mu$ is a constant selected from the interval $(\tau_{t}^2, \tau_{t}^2+g_{bbn}^2)$, both \eqref{eq.pm.massive.simp} and \eqref{eq.pf.massive.simp} tend to zero as $M$ tends to infinity.
%\end{proof}

\subsubsection{Cooperative MIMO}
We briefly discuss the asymptotic behavior of the user activity detection for the cooperative MIMO system. We are interested in the limit where the number of cooperative BSs $B_{bn}$ increases to infinity with the number of antennas per BS fixed.

\edww{Interestingly,}
perfect detection cannot be achieved in the regime where $B_{bn}$ tends to infinity. As an intuitive explanation, we re-write \eqref{eq.pm_coop} by introducing a sequence of i.i.d.\ random variables $X_j, j\in \mathcal{B}_{bn}$ all following $\chi^2$ distribution with $2M$ degree of freedom as
%40
\begin{align}
P_M^{\mathrm{CO},bn}=\mathrm{Pr}\left(\sum_{j\in\mathcal{B}_{bn}}\frac{g_{jbn}^2}{2\tau_t^2}X_j < l_{bn}^{\mathrm{CO}}\right),
\end{align}
where the coefficient $g_{jbn}^2/(2\tau_t^2)$ can be seen as a measure of the contribution from $X_j$. Note that for user $n$ from cell $b$, the sum $\sum_{j\in\mathcal{B}_{bn}} g_{jbn}^2/(2\tau_t^2)$ converges to a finite value as $B_{bn}$ tends to infinity due to the attenuation of $g_{jbn}$, indicating that $P_M^{\mathrm{CO},bn}$ is dominated by only a few terms in $\sum_{j\in\mathcal{B}_{bn}} g_{jbn}^2/(2\tau_t^2)$ with large coefficients. Similar analysis applies to the probability of false alarm. Although perfect detection cannot be achieved in the cooperative MIMO system as $B_{bn}$ tends to infinity, cooperative MIMO is expected to offer performance improvement to
%certain
cell-edge users as $B_{bn}$ increases. We illustrate this point by simulations in Section~\ref{sec.simu}.

%%%%%%%%%%%%%%%%%%%%%%%%%%%%%%%%%
%		V. Simulation Results
%%%%%%%%%%%%%%%%%%%%%%%%%%%%%%%%%
\section{Simulation Results}
\label{sec.simu}
We consider a network comprising $19$ hexagonal cells placed in three tiers.
There are in total $2000\times 19=38000$ potential users uniformly distributed in the network, among which five percent are active. The BS-to-BS distance is 2000m. The path-loss from BSs to users is modeled as $15.3+37.6\log_{10}(d)$, where $d$ is BS-user distance measured in meter. The transmit power of each user is $23$dBm, and the background noise power is $-169$dBm/Hz over $10$MHz. The length of signature sequences is set as $400$ unless otherwise specified. \edzr{Regarding the signature length, it is worth mentioning that in practice the maximum possible length is limited by the dimensions of the channel coherence block. Typical values of the dimensions could be a few hundreds to a few thousands, depending on the wireless propagation environment \cite{Bjoernson2016}. By noting that the potential users in mMTC are usually low-mobility and low-rate devices, it is very likely that the coherence block could be large enough to support long signature sequences transmission for massive random access.}

We illustrate the performance of the users in the innermost cell
that endure the severest inter-cell interference. \edzr{By adjusting the threshold in the LLR detection for each user in the cell, a trade-off between the probability of false alarm and the probability of missed detection can be obtained. An example of the trade-off curves of two different users in a non-cooperative MIMO system is illustrated in Fig.~\ref{fig.tradeoff}, where the curves with legend ``predicted" are obtained from the analytical characterization, whereas those with legend ``simulated" are obtained by running the AMP algorithm. It can be seen that the simulation agrees with the analysis very well. In practice, as the cost of missed detection may be different from the cost of false alarm, the threshold for each user needs to be determined by an analysis of such cost. For the convenience of demonstration, in the simulations in the rest of this section we select the threshold in such a way that the probability of false alarm and the probability of missed detection are equal and label the probability as ``probability of false alarm/missed detection". Note that users at different locations have different probabilities of error. For this reason, we plot the CDF of the probabilities of false alarm/missed detection. In the CDF curve, we are particularly interested in the probability of false alarm/missed detection of the cell-edge user, which is defined as the 95-percentile point of the CDF.}

%%%%%%%%%%%%%%%%%%%%%%%%%%%%%%%%%
%		Fig. 1
%%%%%%%%%%%%%%%%%%%%%%%%%%%%%%%%%
\begin{figure}
\centerline{\epsfig{figure=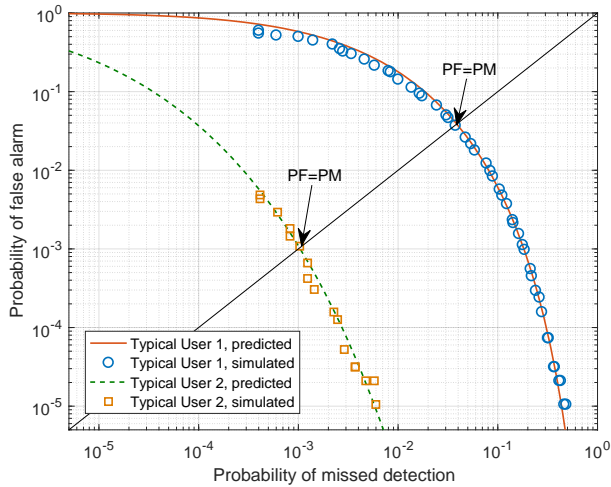,width=0.5\textwidth}}
\caption{\edzs{Probability of false alarm versus probability of missed detection in a non-cooperative massive MIMO system with $M=8$. Typical User 1 and Typical User 2
correspond to the user at 95-percentile and 50-percentile of the CDF curve, respectively.}}
\label{fig.tradeoff}
\end{figure}

%%%%%%%%%%%%%%%%%%%%%%%%%%%%%%%%%
%		VI.A  Validation of Analytic Results
%%%%%%%%%%%%%%%%%%%%%%%%%%%%%%%%%
\subsection{Validation of Analytic Results}
\label{sec.simu.vali}
We first validate our performance analysis for massive MIMO and cooperative MIMO via Monte Carlo simulations. The results with legend ``predicted" are obtained from the analytical characterization of the detection error in Section~\ref{sec.massive}, whereas those with legend ``simulated" are obtained by running the AMP algorithm over $10^5$ channel realizations.

%%%%%%%%%%%%%%%%%%%%%%%%%%%%%%%%%
%		VI.A.1  Massive MIMO
%%%%%%%%%%%%%%%%%%%%%%%%%%%%%%%%%
\subsubsection{Massive MIMO}
\edzr{Fig.~\ref{fig.massive} shows the CDF of the probability of false alarm/missed detection in the massive MIMO scenario, from which we observe that the simulation results match the analytical results very well under different numbers of antennas.}
Note that the predicted results are obtained with the circular coverage approximation used in \eqref{eq.inf.expr}, indicating that the effect of the approximation is indeed negligible.
\edzr{We would also like to remark that, unlike the conventional massive MIMO where the number of users is assumed much smaller than the number of antennas, in the} \edzs{mMTC scenario} \edzr{the number of active users could be comparable or even larger than the number of antennas due to the large user pool. However, it can be seen from Fig.~\ref{fig.massive} that as $M$ increases, the detection errors of all users drop rapidly: given about $100$ active users per cell and $32$ antennas per BS, more than $90$ percent of the users achieve error probability less than $10^{-4}$, and almost all users achieve error probability less than $10^{-3}$.}

%%%%%%%%%%%%%%%%%%%%%%%%%%%%%%%%%
%		Fig. 2
%%%%%%%%%%%%%%%%%%%%%%%%%%%%%%%%%
\begin{figure}
\centerline{\epsfig{figure=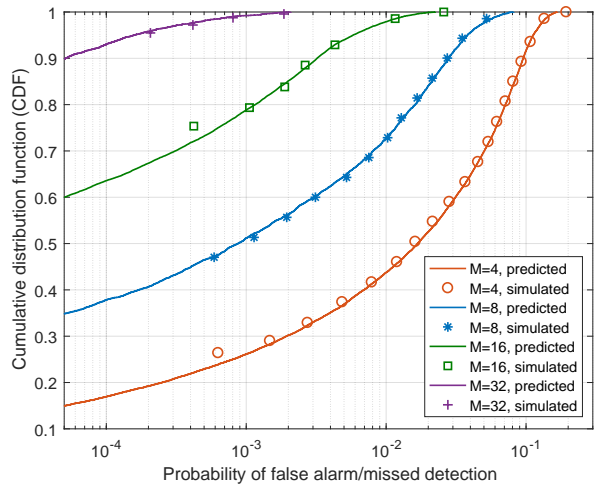,width=0.5\textwidth}}
\caption{\edzr{CDF of the detection error of the users in the innermost cell for the massive MIMO architecture.}}
\label{fig.massive}
\end{figure}

%%%%%%%%%%%%%%%%%%%%%%%%%%%%%%%%%
%		VI.A.1  Cooperative MIMO
%%%%%%%%%%%%%%%%%%%%%%%%%%%%%%%%%
\subsubsection{Cooperative MIMO}
Before validating the performance analysis for the cooperative MIMO case, we first aim to illustrate
that recovering the inter-cell interference instead of treating it as noise indeed brings improvement, as stated in Proposition~\ref{prop5}. We consider the innermost BS and run AMP for several different detection ranges. We gradually increase the detection range of the BS to recover more inter-cell interference, while treating the signal from beyond the range as noise. Fig.~\ref{fig.tinrec} depicts how the value of $\tau_{\infty}^{2}$ in AMP changes with different choices of the range. Fig.~\ref{fig.tinrec} shows that enlarging the detection range to recover more inter-cell interference helps reduce $\tau_{\infty}^{2}$ in AMP, which improves the reliability of the preliminary activity detection at each BS. However, the improvement becomes negligible when the detection range exceeds a certain point whose value depends on $M$ and $L$. This is because the interference caused by out-of-cell users that are sufficiently far away from the BS is weak so that it can be treated as noise without any significant performance degradation. We observe that increasing $M$ or $L$ helps raise the point, indicating that the BS becomes more capable of recovering the inter-cell interference.

%%%%%%%%%%%%%%%%%%%%%%%%%%%%%%%%%
%		Fig. 3
%%%%%%%%%%%%%%%%%%%%%%%%%%%%%%%%%
\begin{figure}
\centerline{\epsfig{figure=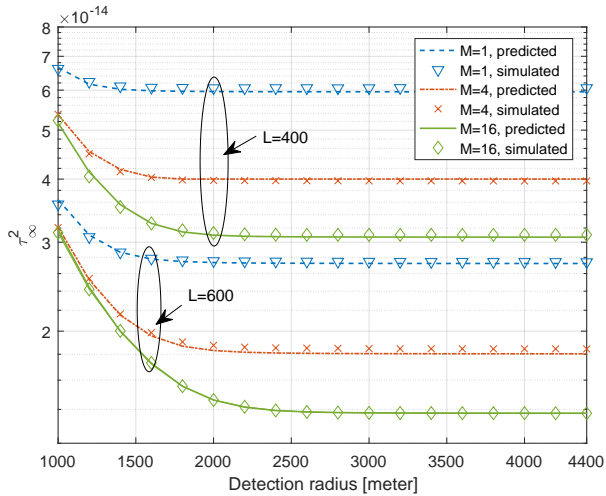,width=0.5\textwidth}}
\caption{The values of $\tau_{\infty}^{2}$ under different detection ranges of the innermost BS for cooperative MIMO.}
\label{fig.tinrec}
\end{figure}

Fig.~\ref{fig.coop} shows the CDF of the detection error for the cooperative MIMO system, where each BS detects the users from its own cell as well as six surrounding cells, and each BS is able to forward a subset of LLRs of the detected users to the CU perfectly. The legend ``Coop. w/ 3 BSs" indicates that the number of BSs involved in the LLR aggregation for each user is three, i.e., $B_{bn}=3, \forall b,n$. We plot the case ``Coop. w/ 1 BS" to provide a performance baseline to illustrate the impact of $B_{bn}$. Note that although only one BS is involved in the LLR aggregation, the case ``Coop. w/ 1 BS" still differs from the massive MIMO scenario by recovering the inter-cell interference rather than treating it as noise. We observe from Fig.~\ref{fig.coop} that the simulation results match the predicted results for different $M$ and $B_{bn}$. Compared to ``Coop. w/ 1 BS", by adding one more BS in the LLR aggregation, ``Coop. w/ 2 BSs" substantially improves the cell-edge user performance, showing the necessity of LLR aggregation (of more than one BS). However, the improvement by further increasing $B_{bn}$ is less prominent, indicating that it is enough to set $B_{bn}$ as a small number, such as two or three, to exploit most benefit of cooperation. It
\edww{can}
be observed that as $M$ increases, the performance gaps become more prominent.

%%%%%%%%%%%%%%%%%%%%%%%%%%%%%%%%%
%		Fig. 4
%%%%%%%%%%%%%%%%%%%%%%%%%%%%%%%%%
\begin{figure}
\centerline{\epsfig{figure=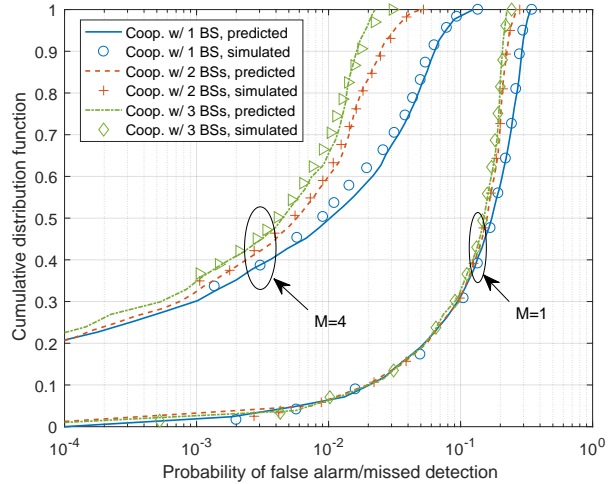,width=0.5\textwidth}}
\caption{CDF of the detection error of the users in the innermost cell for the cooperative MIMO architecture.}
\label{fig.coop}
\end{figure}

%%%%%%%%%%%%%%%%%%%%%%%%%%%%%%%%%
%		VI.B  Massive MIMO vs Cooperative MIMO for User Activity Detection
%%%%%%%%%%%%%%%%%%%%%%%%%%%%%%%%%
\subsection{Massive MIMO vs Cooperative MIMO for User Activity Detection}
\label{sec.simu.compare}
This section exploits our analytic results to numerically compare the detection performance of massive MIMO and cooperative MIMO. We are mostly interested in the cell-edge user performance, which is defined here as the 95-percentile point of the CDF. Fig.~\ref{fig.massivecoopcdf} plots the cell-edge user performance in massive MIMO and cooperative MIMO, where the number of the cooperative BSs, $B_{bn}$, increases from one to four for cooperative MIMO. The number of antennas per BS for cooperative MIMO is set as $4,8,12$, or $16$, while a larger array with $16,32,48$, or $64$ antennas is considered for massive MIMO. Fig.~\ref{fig.massivecoopcdf} clearly shows that as the number of cooperating BSs increases from one to three, cooperative MIMO brings substantial reduction in detection error. The performance improvement then becomes marginal with four cooperative BSs, indicating that cooperating three BSs is already enough to achieve most of the benefits of cooperative MIMO.

\rr{We also observe from Fig.~\ref{fig.massivecoopcdf} that by increasing the number of cooperating BSs, cooperative MIMO with $4,8,12$ or $16$ antennas approximately approaches massive MIMO with $16,32,48$, or $64$ antennas, respectively. Although massive MIMO still achieves sightly lower error probabilities, cooperative MIMO has its advantage in substantially reducing the number of antennas needed at each BS. To compare the effectiveness of exploiting cooperation versus large number of antennas more clearly, we plot the cell-edge user performance of cooperative MIMO and massive MIMO with an increasing number of antennas at the BS in Fig.~\ref{fig.massivecooptp}. The number of BSs in cooperation for cooperative MIMO is set as three, i.e., $B_{bn}=3$, Fig.~\ref{fig.massivecooptp} first confirms that as $M$ increases, the detection error drops almost exponentially in massive MIMO, indicating that by employing a sufficiently large number of antennas, the detection error can be driven to zero. Fig.~\ref{fig.massivecooptp} also shows that in terms of cell-edge user performance, given a fixed total number of antennas involved in the detection, massive MIMO is not as efficient as cooperative MIMO in exploiting multiple antennas. For example, cooperating three BSs with $16$ antennas at each BS achieves better performance than massive MIMO with $48$ antennas at each BS. The improvement mainly comes from the benefit of recovering the inter-cell interference rather than treating the interference as noise in cooperative MIMO.}

%%%%%%%%%%%%%%%%%%%%%%%%%%%%%%%%%
%		Fig. 5
%%%%%%%%%%%%%%%%%%%%%%%%%%%%%%%%%
\begin{figure}
\centerline{\epsfig{figure=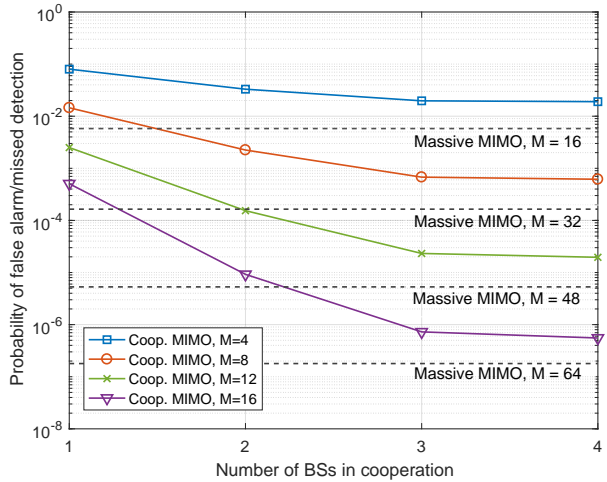,width=0.5\textwidth}}
\caption{
\edzr{Comparison of massive MIMO and cooperative MIMO with an increasing number of cooperative BSs in terms of cell-edge user performance.}}
\label{fig.massivecoopcdf}
\end{figure}

%%%%%%%%%%%%%%%%%%%%%%%%%%%%%%%%%
%		Fig. 6
%%%%%%%%%%%%%%%%%%%%%%%%%%%%%%%%%
\begin{figure}
\centerline{\epsfig{figure=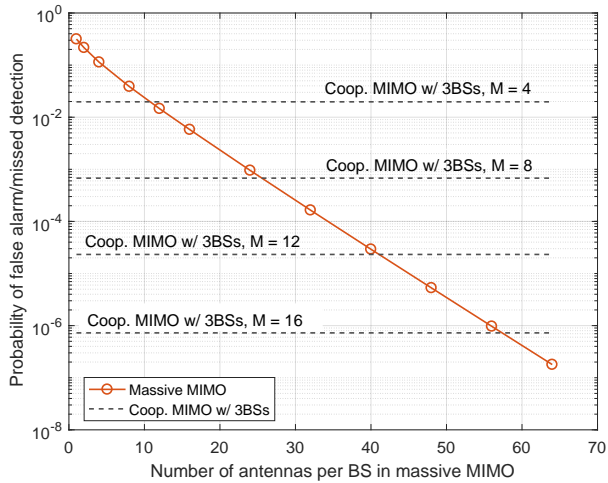,width=0.5\textwidth}}
\caption{
\edzr{Comparison of cooperative MIMO and massive MIMO with an increasing number of BS antennas in terms of cell-edge user performance.}}
\label{fig.massivecooptp}
\end{figure}

\edzr{Fig.~\ref{fig.massivecoopl} shows the cell-edge user performance versus the length of the signature sequence, $L$, in massive MIMO and cooperative MIMO with three-BS cooperation, i.e., $B_{bn}=3$.} We observe that increasing $L$ efficiently improves the detection performance, which is due to the reduction of $\tau_{\infty}^2$ in AMP. \edzr{The improvement is more substantial in cooperative MIMO because of the recovery of the inter-cell interference.
Further, we observe that} \edzs{for this specific case, a three-BS} \edzr{cooperative MIMO system achieves a comparable detection performance to a massive MIMO system by using only around one fourth of the antennas per BS, when $L$ is large.}

%%%%%%%%%%%%%%%%%%%%%%%%%%%%%%%%%
%		Fig. 7
%%%%%%%%%%%%%%%%%%%%%%%%%%%%%%%%%
\begin{figure}
\centerline{\epsfig{figure=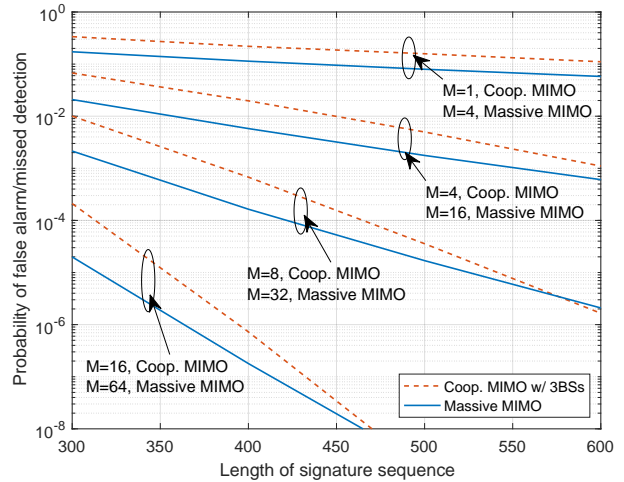,width=0.5\textwidth}}
\caption{\edzr{Cell-edge user performance versus the length of the signature sequence.}}
\label{fig.massivecoopl}
\end{figure}

%%%%%%%%%%%%%%%%%%%%%%%%%%%%%%%%%
%		VI.C  Impact of Limited Fronthaul
%%%%%%%%%%%%%%%%%%%%%%%%%%%%%%%%%
\subsection{\edww{Impact of Limited Fronthaul}}
\label{sec.simu.quant}
\edzr{We have so far compared massive MIMO and cooperative MIMO with ideal LLR forwarding. In this section, we investigate the impact of limited
fronthaul capacity
on cooperative MIMO}.
\edzr{We show that the performance of ideal LLR forwarding can be efficiently approached by the} \edzs{quantization scheme proposed in Section~\ref{sec.quant}}. Here, we set the cooperation size as $B_{bn}=3, \forall b,n,$
\edf{so that if each LLR value is quantized to $Q$ bits,
the amount of information bits needed to be forwarded via the fronthaul link from each BS is about $3NQ$ \edff{bits}
\edwn{per signature sequence length.}}
Fig.~\ref{fig.q1} plots the CDF of the detection error when $M=1$ for different choices of quantization bits $Q$ and the percentage of the quantization coverage $\zeta$. Fig.~\ref{fig.q1} shows that for  single-antenna BSs, the parameter $\zeta=0.95$ is preferred to $\zeta=0.99$. This is because the generated quantization levels with $\zeta=0.99$ fail to capture the small LLR values. Further, we observe that with $\zeta=0.95$, the cooperative MIMO system with only $3$ quantization bits per LLR can already approach the performance of the ideal LLR forwarding case. This illustrates the effectiveness of the proposed simple uniform quantization scheme.

Fig.~\ref{fig.q2} plots the detection performance with different $Q$ and $\zeta$ when each BS is equipped with $M=4$ antennas. Compared to Fig.~\ref{fig.q1}, we observe that a slightly larger $\zeta=0.97$ is preferred, which is due to the shrinkage of the tail probability of the LLR distribution for larger $M$. It can be seen from Fig.~\ref{fig.q2} that with $\zeta = 0.97$, the proposed scheme with $Q=4$ can approach the performance of the ideal LLR forwarding case.

%%%%%%%%%%%%%%%%%%%%%%%%%%%%%%%%%
%		Fig. 8
%%%%%%%%%%%%%%%%%%%%%%%%%%%%%%%%%
\begin{figure}
\centerline{\epsfig{figure=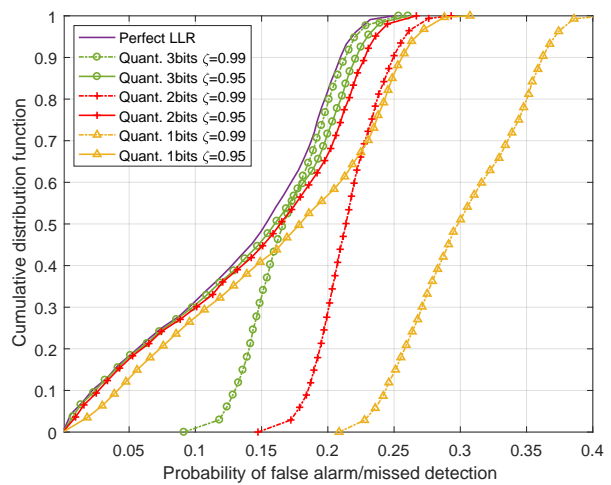,width=0.5\textwidth}}
\caption{CDF of the detection error with LLR quantization in cooperative architecture when $M=1$.}
\label{fig.q1}
\end{figure}

%%%%%%%%%%%%%%%%%%%%%%%%%%%%%%%%%
%		Fig. 9
%%%%%%%%%%%%%%%%%%%%%%%%%%%%%%%%%
\begin{figure}
\centerline{\epsfig{figure=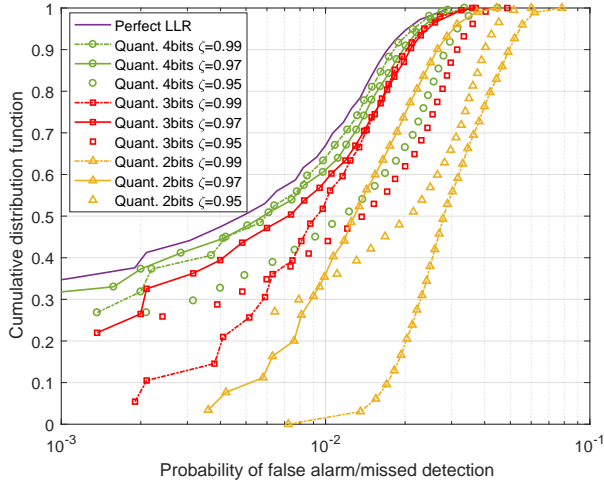,width=0.5\textwidth}}
\caption{CDF of the detection error with LLR quantization in cooperative architecture when $M=4$.}
\label{fig.q2}
\end{figure}

\section{Conclusion}
\label{sec.con}
This paper studies the multi-cell user activity detection problem for massive connectivity, in which inter-cell interference is a limiting factor for accurate detection. Two promising network architectures, massive MIMO and cooperative MIMO, are compared in terms of their effectiveness in combating inter-cell interference, while employing the computationally efficient AMP algorithm. This paper characterizes the detection performances for both network architectures. Numerical results validate the analytic analysis, and demonstrate that massive MIMO effectively improves the performance of all users as the number of antennas per BS increases, whereas cooperative MIMO mainly improves the performance of cell-edge users as the size of the cooperation increases. \edzs{Numerical simulations of a specific practical scenario suggest that in terms of the cell-edge user performance, a cooperative MIMO system with three cooperating BSs is as effective as a non-cooperative massive MIMO system with four times as many antennas in overcoming inter-cell interference.}

\appendix
\subsection{Proof of Proposition \ref{prop1}}
%%%%%%%%%%%%%%%%%%%%%%%%%%%%%%%%%%
% 	APPENDIX A
%%%%%%%%%%%%%%%%%%%%%%%%%%%%%%%%%%
\label{A:propA}
Let $d$ denote the user-BS distance. Since users are uniformly distributed,
%the probability density function
\edf{the PDF}
of $d$ is
%38
\begin{align}\label{A.eq.d}
p(d)=\frac{2d}{R_{max}^2-R_{min}^2},\quad R_{min}\leq d\leq R_{max}.
\end{align}
Using \eqref{A.eq.d},
%the probability density function
\edf{the PDF}
of the large-scale fading coefficient $g=10^{-(\alpha+\beta\log_{10}d)/20}$ is given as
%39
\begin{align}
p(g)&=\left(\frac{40}{\beta}\right)\frac{10^{-2\alpha/\beta}g^{-40/\beta-1}}{R_{max}^2-R_{min}^2}, \,\,\,\epsilon_{min} \leq g \leq \epsilon_{max},
\end{align}
where $\epsilon_{min}=10^{-(\alpha+\beta\log_{10}R_{max})/20}$ and $\epsilon_{max}= 10^{-(\alpha+\beta\log_{10}R_{min})/20}$.
\edf{By defining $a\triangleq 40\beta^{-1}10^{-2\alpha/\beta}$, $\gamma \triangleq 40/\beta+1$, we then obtain \eqref{eq.dist.g}. Here, we omit the \edww{shadowing component}
%of the large scale fading}
for simplicity. }

\subsection{Proof of Proposition \ref{prop2}}
%%%%%%%%%%%%%%%%%%%%%%%%%%%%%%%%%%
% 	APPENDIX B
%%%%%%%%%%%%%%%%%%%%%%%%%%%%%%%%%%
\label{A:propB}
We only provide
\edww{the}
key steps here. A similar
%detailed
derivation can be found in \cite{Chen2018}.
By induction, we assume that $\mathbf{\Sigma}_{t}$ is a diagonal matrix with identical diagonal entries, i.e., $\mathbf{\Sigma}_{t}=\tau_t^2\mathbf{I}$. We compute the covariance matrix $\mathbf{C}^{t}=\mathbb{E}[\mathbf{D}^{t}(\mathbf{D}^{t})^{*}|\mathbf{\tilde{R}}^{t},G_b]$ given $\mathbf{\tilde{R}}_t=\mathbf{\tilde{r}}_t$ and $G_b=g$ as
%43
\begin{align}
\mathbf{C}^{t}(\mathbf{\tilde{r}}_t,g)=\frac{g^2\tau_t^2\phi^{-1}(\mathbf{\tilde{r}}_t)}{g^2+\tau_t^2}\mathbf{I}+\frac{\phi^{-1}(\mathbf{\tilde{r}}_t)-\phi^{-2}(\mathbf{\tilde{r}}_t)}{g^{-4}(g^2+\tau_t^2)^2}\mathbf{\tilde{r}}_{t}^{T}(\mathbf{\tilde{r}}_{t}^{T})^{*},
\end{align}
where $\phi(\mathbf{\tilde{r}}_t)\triangleq 1+(1-\lambda)(1+g^2\tau_t^{-2})^{M}\exp(-\Delta\|\mathbf{\tilde{r}}_t\|_2^2)/\lambda$. We then derive $\mathbb{E}\left[\mathbf{C}^{t}\big|G_b\right]$ by taking the expectation with respect to $\mathbf{\tilde{R}}_{t}$. It can be shown that each off-diagonal entry of  $\mathbb{E}\left[\mathbf{C}^{t}\big|G_b\right]$ is zero, and each diagonal entry of $\mathbb{E}\left[\mathbf{C}^{t}\big|G_b\right]$ given $G_b=g$ can be computed as
%44
\begin{align}\label{A.eq.C}
C=&\frac{\lambda g^2\tau_t^2}{g^2+\tau_t^2} + \frac{\lambda g^4}{g^2+\tau_t^2}\left(1-\frac{\varphi_{M}(g^2\tau_t^{-2})}{\Gamma(M+1)}\right)
\end{align}
Finally, we obtain $\mathbb{E}\left[\mathbb{E}\left[\mathbf{C}^{t}\big|G_b\right]\right]$ by taking the expectation with respect to $G_b$, which leads the third term in the right hand side of \eqref{eq.se.tinexpr}, indicating that $\mathbf{\Sigma}_{t+1}$ is a diagonal matrix with identical diagonal entries, which completes the induction.

\subsection{Proof of Proposition \ref{prop4}}
%%%%%%%%%%%%%%%%%%%%%%%%%%%%%%%%%%
% 	APPENDIX C
%%%%%%%%%%%%%%%%%%%%%%%%%%%%%%%%%%
\label{A:CoopSE}
The proof to show that $\mathbf{\Sigma}_{t}$ stays as a diagonal matrix with identical diagonal entries at each iteration follows the same idea in Proposition \ref{prop2}. To derive an explicit expression for the second term in the right hand side of \eqref{eq.se.2}, we use the result in \eqref{A.eq.C}. By take the expectation of \eqref{A.eq.C} with respect to $G$ (rather than $G_b$), we get the second term in \eqref{eq.se.recexpr}. Note that
%the probability density function
\edf{the PDF}
of $G$ can be obtained from Proposition \ref{prop1} by setting $R_{max}=R_{net}$ and $R_{min}=0$.

\subsection{Proof of Proposition \ref{prop5}}
%%%%%%%%%%%%%%%%%%%%%%%%%%%%%%%%%%
% 	APPENDIX D
%%%%%%%%%%%%%%%%%%%%%%%%%%%%%%%%%%
\label{A:propD}
We directly compare the right hand side of \eqref{eq.se.tinexpr} and \eqref{eq.se.recexpr} by
breaking the integral in \eqref{eq.se.recexpr} into $\int_{\epsilon_{1}}^{\epsilon_{2}}\psi(g)dg$ and $\int_{\epsilon_{2}}^{\infty}\psi(g)dg$, and noticing $\check{a}B=\hat{a}$. We then relax the integral over $[\epsilon_{1}, \epsilon_{2}]$ as
%45
\begin{align}
\int_{\epsilon_{1}}^{\epsilon_{2}}\hat{a}\psi(g)dg &<\int_{\epsilon_{1}}^{\epsilon_{2}}\hat{a}\left(\frac{g^{2-\gamma}\tau_t^2}{g^2+\tau_t^2}
+\frac{g^{4-\gamma}}{g^2+\tau_t^2}\right)dg \nonumber\\
&=\frac{R_{net}^2-R_{cell}^2}{R_{cell}^2}\int_{\epsilon_{1}}^{\epsilon_{2}}\frac{a}{R_{net}^2-R_{cell}^2}g^{2-\gamma}dg\nonumber\\
&=(B-1)\mathbb{E}\left[G_{/b}^2\right].
\end{align}
%which
\edf{This indicates that the right hand side of \eqref{eq.se.tinexpr} is larger than that of \eqref{eq.se.recexpr}, and
hence
%the fixed point of \eqref{eq.se.tinexpr} should be larger than that in \eqref{eq.se.recexpr} i.e.,
$\tau_{\infty}^{\mathrm{TIN}}> \tau_{\infty}^{\mathrm{REC}}$.}

\subsection{Proof of Proposition \ref{prop2.1}}
%%%%%%%%%%%%%%%%%%%%%%%%%%%%%%%%%%
% 	APPENDIX E
%%%%%%%%%%%%%%%%%%%%%%%%%%%%%%%%%%
\label{A:propB.1}
\edf{
Using %the equations
\eqref{eq.likeli2} and \eqref{eq.llr},
%Based on the LLR \edf{expression} in \eqref{eq.llr}, with a threshold $l_{bn}^{\mathrm{LS}}$ on $\|\mathbf{\tilde{x}}^{t}_{bbn}\|_2^2$ to decide on the activity,
the
%resulting
probability of missed detection of the user can be computed as
\begin{align}
P_M^{\mathrm{LS},bn} &=\int_{\mathcal{D}^{\mathrm{LS}}}p(\mathbf{\tilde{x}}_{bbn}|a_{bn}=1)d\mathbf{\tilde{x}}_{bbn}\nonumber\\
&\overset{}{=}\int_{\mathcal{D}^{\mathrm{LS}}} \frac{\exp\left(-\|\mathbf{\tilde{x}}_{bbn}^{t}\|_2^2(\tau_{t}^2+g_{bbn}^2)^{-1}\right)}{\pi^M (\tau_{t}^2+g_{bbn}^2)^{M}}d\mathbf{\tilde{x}}_{bbn},\label{eq.pm.massive.int}
\end{align}
where $\mathcal{D}^{\mathrm{LS}}\triangleq\{\|\mathbf{\tilde{x}}^{t}_{bbn}\|_2^2<l_{bn}^{\mathrm{LS}}\}$.
%, and \eqref{eq.pm.massive.int} is obtained by using \eqref{eq.likeli2}.
%Noting that
By noting that
the integral
in \eqref{eq.pm.massive.int} \edww{over}
$\mathcal{D}^{\mathrm{LS}}$ can be interpreted as the CDF of a $\chi^2$ random variable with $2M$ degrees of freedom, %which directly leads
we can write
\eqref{eq.pm.massive}. Similarly, the probability of false alarm for user $n$ in cell $b$ can be obtained as \eqref{eq.pf.massive}.
}

\subsection{Proof of Proposition \ref{prop5.1}}
%%%%%%%%%%%%%%%%%%%%%%%%%%%%%%%%%%
% 	APPENDIX D
%%%%%%%%%%%%%%%%%%%%%%%%%%%%%%%%%%
\label{A:propD.1}
\edz{
Based on the aggregated LLR expression in \eqref{eq.llr.agg} with a threshold $l_{bn}^{\mathrm{CO}}$ on $ \sum_{j\in\mathcal{B}_{bn}}\Delta_{jbn}\|\mathbf{\tilde{x}}_{jbn}\|_2^2$, the probability of missed detection for user $n$ from cell $b$ is given as
\begin{align}\label{A.eq.pm_coop}
P_M^{\mathrm{CO},bn}= \int_{\mathcal{D}^{\mathrm{CO}}}\prod_{j\in\mathcal{B}_{bn}}p(\mathbf{\tilde{x}}_{jbn}|a_{bn}=1)d\mathbf{\tilde{x}}_{bn}, %=&\int_{\mathcal{D}^{\mathrm{CO}}}\frac{\exp\left(\sum_{j\in\mathcal{B}_{bn}}-\|\mathbf{\tilde{x}}_{jbn}\|_2^2/(g_{jbn}^2+\tau_{tj}^2)\right)}{\prod_{j\in\mathcal{B}_{bn}}\pi^{M}(\tau_{tj}^2+g_{jbn}^2)^M}d\mathbf{\tilde{x}}_{bn},
\end{align}
where $\mathcal{D}^{\mathrm{CO}}\triangleq \left\{\sum_{j\in\mathcal{B}_{bn}}\Delta_{jbn}\|\mathbf{\tilde{x}}_{jbn}\|_2^2 < l_{bn}^{\mathrm{CO}}\right\}$. Note that similar to \eqref{eq.likeli2}, when user $n$ from cell $b$ is active, the likelihood of observing $\mathbf{\tilde{x}}_{jbn}^{t}$ at BS $j, j\in \tilde{B}_{bn}$ can be expressed as
\begin{align}\label{A.eq.likelihood}
p(\mathbf{\tilde{x}}_{jbn}^{t}|a_{bn}=1)=\frac{\exp\left(-\|\mathbf{\tilde{x}}_{jbn}^{t}\|_2^2(\tau_{t}^2+g_{jbn}^2)^{-1}\right)}{\pi^M (\tau_{t}^2+g_{jbn}^2)^{M}}.
\end{align}
By plugging \eqref{A.eq.likelihood} into \eqref{A.eq.pm_coop}, we can obtain \eqref{eq.pm_coop}. \edf{The probability of false alarm in \eqref{eq.pf_coop} for user $n$ in cell $b$ can be obtained in a similar way.}
}
\bibliography{chenbib}

\end{document}